\newcommand{\maketextstyle}{\textstyle} %
\title{\bf \LARGE Nonlinear Bandwidth and Bode Diagrams based on \\ Scaled Relative Graphs}
\author{Julius P.~J. Krebbekx$^1$, Roland Tóth$^{1,2}$, Amritam Das$^1$
\thanks{$^1$Control Systems group, Department of Electrical Engineering,  Eindhoven University of Technology, The Netherlands.}
\thanks{$^2$Systems and Control Lab, HUN-REN Institute for Computer Science and Control, Budapest, Hungary. 
E-mail: {\tt\small \{j.p.j.krebbekx, r.toth, am.das\}@tue.nl}}
}
\date{\today}
\begin{document}

\maketitle

\begin{abstract}
    \emph{Scaled Relative Graphs} (SRGs) provide a novel graphical frequency-domain method for the analysis of \emph{Nonlinear} (NL) systems. In this paper, we restrict the SRG to particular input spaces to compute frequency-dependent incremental gain bounds for nonlinear systems. This leads to a NL generalization of the Bode diagram, where the sinusoidal, harmonic, and subharmonic inputs are considered separately. When applied to the analysis of the NL loop transfer and sensitivity, we define a notion of bandwidth for both the open-loop and closed-loop, compatible with the \emph{Linear Time-Invariant} (LTI) definitions. We illustrate the power of our method on the analysis of a DC motor with a parasitic nonlinearity and verify our results in simulations.
\end{abstract}

\section{Introduction}

In the case of \emph{Linear Time-Invariant} (LTI) systems, graphical analysis using the Nyquist diagram~\cite{nyquistRegenerationTheory1932} and the Bode diagram~\cite{bodeRelationsAttenuationPhase1940} forms the cornerstone of control engineering. They are easy to use and allow for intuitive analysis and controller design methods. However, it is unclear how to generalize graphical frequency-domain methods to nonlinear system analysis and controller design.

When pushing the performance of dynamical systems through control design, the \emph{Nonlinear} (NL) dynamics start to play a crucial role. Inspired by the success of analyzing performance through Bode diagrams in the LTI case, e.g. for mixed sensitivity shaping~\cite{skogestadMultivariableFeedbackControl2010}, there have been multiple attempts to generalize the Bode diagram to the NL case. The \emph{Describing Function} (DF)~\cite{krylovIntroductionNonlinearMechanics1947}, which is approximative, has been the first step in this direction, and has been successfully applied in practice~\cite{pogromskyFrequencyDomainPerformance2014,heertjesVariableGainMotion2016}. Most modern and non-approximative methods focus on sinusoidal inputs~\cite{pavlovFrequencyDomainPerformance2007,heertjesVariableGainMotion2016}, but other works also consider higher-order harmonics in the input~\cite{rijlaarsdamComparativeOverviewFrequency2017} or specific periodic inputs~\cite{pavlovSteadystatePerformanceOptimization2013}.

The \emph{Scaled Relative Graph} (SRG) \cite{ryuScaledRelativeGraphs2022}, proposed in~\cite{chaffeyGraphicalNonlinearSystem2023}, is a new graphical method to analyze nonlinear feedback systems. It is an exact method, and it is intuitive because of its close connection to the Nyquist diagram. Moreover, SRG analysis can provide performance bounds in terms of (incremental) $L_2$-gain. Originally, SRG analysis was developed for stable \emph{Single-Input Single-Output} (SISO) systems only and has recently been extended to include unstable elements in the loop~\cite{krebbekxScaledRelativeGraph2024,krebbekxScaledRelativeGraph2025,chenSoftHardScaled2025}, and \emph{Multiple-Input Multiple-Output} (MIMO) systems~\cite{krebbekxGraphicalAnalysisNonlinear2025}, which may be non-square. While practical stability analysis is an important outcome of the SRG framework, our aim is to take one step further and establish a non-approximative frequency-domain performance shaping tool for NL systems.

In this paper, we aim to establish such a tool for SISO NL systems with the property that they preserve the periodicity of the input. Our approach is based on a \emph{Linear Fractional Representation} (LFR) of the system, where the nonlinearity can be static or dynamic, making our method highly flexible. The core idea is that we can compute the incremental gain of the system for sets of input signals with a common period, where the SRG is used as the main computational tool. By evaluating these gains over a grid of frequencies, one obtains a Bode diagram for the NL system. The NL Bode diagram is used to define the NL bandwidth for the loop transfer and sensitivity, compatible with the LTI definition. 

Our method can reproduce most aspects of existing work, such as sinusoidal input Bode diagrams. Also, our results are more general since we are able to compute the gain for any frequency and \emph{arbitrary} higher harmonics, but amplitude-dependent results, compared to, e.g.~\cite{pavlovUniformOutputRegulation2006}, are not yet reflected in our approach. An entirely novel aspect of our work is the gain for signals with \emph{subharmonics}, which allows low-frequency analysis of sensitivity functions, going beyond sinusoidal inputs. Finally, all our computations are based on modular interconnections of LTI and NL input/output operators, making the method amenable to data-driven techniques. 

This paper is structured as follows. In Section~\ref{sec:preliminaries}, we present the required preliminaries. In Section~\ref{sec:freq_domain_analysis_of_incremental_systems}, we derive a method to analyze NL systems in the frequency domain, leading to our definition of the NL Bode plot. We show how SRGs are used to compute the NL Bode diagrams in Section~\ref{sec:nl_bode_using_srg}, focusing on the loop transfer and the sensitivity. Finally, we apply our results to a practical design example in Section~\ref{sec:example} and present our conclusions in Section~\ref{sec:conclusion}.

\section{Preliminaries}\label{sec:preliminaries}

\subsection{Notation and Conventions}

Let $\R$ and $\C$ denote the field real and complex numbers, respectively, with $\R_{>0} = (0, \infty)$, $\R_{\geq 0} = [0, \infty)$ and $\C_{\mathrm{Im} \geq 0}= \{ a+ jb \mid \, a \in \R, \, b \in \R_{\geq 0} \}$, where $j$ is the imaginary unit. We denote the complex conjugate of $z = a + jb \in \C$ as $\bar{z} = a-jb$. Let $\mathcal{L}$ denote a Hilbert space, with inner product $\inner{\cdot}{\cdot}_\mathcal{L} : \mathcal{L} \times \mathcal{L} \to \C$ and norm $\norm{x}_\mathcal{L} := \sqrt{\inner{x}{x}_\mathcal{L}}$.  For sets $A, B \subseteq \C$, the sum and product sets are defined as $A+B:= \{ a+b \mid a\in A, b\in B\}$ and $AB:= \{ ab \mid a\in A, b\in B\}$, respectively. A closed disk in the complex plane is $D_r(x) = \{ z \in \C \mid |z-x| \leq r \}$. Denote $D_{[\alpha, \beta]}$ the disk in $\C$ centered on $\R$ which intersects $\R$ in $[\alpha, \beta]$. The radius of a set $\mathcal{C} \subseteq \C$ is defined by $\rmin(\mathcal{C}) := \min_{r>0} : \mathcal{C} \subseteq D_r(0)$. The distance between two sets $\mathcal{C}_1,\mathcal{C}_2 \subseteq \C_\infty$ is defined as $\dist(\mathcal{C}_1,\mathcal{C}_2) := \inf_{z_1 \in \mathcal{C}_1, z_2 \in \mathcal{C}_2} |z_1-z_2|$, where $|\infty-\infty|:=0$.

\subsection{Signals, Systems and Stability}
Since this work focuses on SISO continuous-time systems, the Hilbert space of particular interest is $L_2(\mathbb{F}):= \{ f:\R_{\geq 0} \to \mathbb{F} \mid \norm{f} < \infty \}$, where $\mathbb{F} \in \{\R, \C\}$, the norm is induced by the inner product $\inner{f}{g}:= \int_\mathbb{F} \bar{f}(t) g(t) d t$, $\bar{f}$ denotes the complex conjugate of $f$, and $\mathbb{T}\in \{\R_{\geq 0}, [0,T]\}$ for any $T>0$ is the domain. For brevity, we denote $L_2(\R_{\geq 0}, \mathbb{F})$ as $L_2(\mathbb{F})$, $L_2(\R_{\geq 0}, \R)$ as $L_2$ and, if the time domain is finite, $L_2([0,T],\R)$ as $L_2[0,T]$.

For any $T \in \R_{\geq 0}$, define the truncation operator $P_T : L_2(\mathbb{F}) \to L_2(\mathbb{F})$ as $(P_T u)(t) = 0$ for all $t>T$, else $(P_T u)(t) = u(t)$.
The extension of $L_2(\mathbb{F})$, see Ref.~\cite{desoerFeedbackSystemsInputoutput1975}, is defined as 
\begin{equation*}
    \Lte(\mathbb{F}) := \{ u : \R_{\geq 0} \to \mathbb{F} \mid \norm{P_T u} < \infty \text{ for all } T \in \R_{\geq 0} \}.
\end{equation*}
The space $\Lte(\R)$, which we denote from now on as $\Lte$, will be the most frequently used space of signals. Note that the extension is particularly useful since it includes periodic signals, which are otherwise excluded from $L_2$. 

Periodic signals $v\in \Lte$ can also be viewed as elements in $L_2[0,T]$, where $T$ is the period of the signal, i.e. $v(t)=v(t+T)$ for all $t\in [0,\infty)$. Any $u \in L_2[0,T]$ can be written in Fourier series as $u(t) = \sum_{k \in \Z} \hat{u}_k e^{2 \pi j k t/T}$
where $\hat{u}_k \in \C$ are the Fourier coefficients. The \emph{Root-Mean-Square} (RMS) norm of the signal is defined as %
\begin{equation}\label{eq:rms_fourier_L2}
    \maketextstyle \norm{u}_\mathrm{RMS} := \sqrt{\sum_{k \in \Z} |\hat{u}_k|^2} = 1/\sqrt{T} \norm{u}_{L_2[0,T]}.
\end{equation}

Systems are modeled as operators $R: \Lte \to \Lte$. A system is said to be causal if it satisfies $P_T (Ru) = P_T(R(P_Tu))$, i.e., the output at time $t$ is independent of the signal at times greater than $t$. Unless specified otherwise, we will always assume causality.

Given an operator $R$ on $L_2$, the induced incremental norm of the operator is defined (similar to the notation in~\cite{vanderschaftL2GainPassivityTechniques2017}) as 
\begin{equation}\label{eq:incremental_induced_norm}
    \maketextstyle \Gamma(R) := \sup_{u_1, u_2 \in L_2} \frac{\norm{Ru_1-Ru_2}}{\norm{u_1-u_2}}.
\end{equation}

For causal systems, the induced incremental operator norm on $L_2$ carries over to  $\Lte$ since $\norm{P_T(R(P_Tu))} \leq \norm{R(P_Tu)}$ and $P_T u \in L_2$ for all $u \in \Lte$. We define the incremental $L_2$-gain of a causal operator $R : \Lte(\mathbb{F}) \to \Lte(\mathbb{F})$ as $\Gamma(R)$, i.e., the induced incremental operator norm from~\eqref{eq:incremental_induced_norm}. %

\subsection{Complex Geometry}

Let $z_1, z_2 \in \C_{\mathrm{Im} \geq 0}$ where we assume w.l.o.g. that $\mathrm{Re} (z_1) \leq \mathrm{Re} (z_2)$. Denote $\operatorname{Circ}(z_1, z_2)$ the unique circle through $z_1, z_2$ centered on $\R$. Let 
\begin{multline*}
    \operatorname{Arc}_{\operatorname{min}}(z_1, z_2) = \\ \{ z \in \operatorname{Circ}(z_1, z_2) \mid \mathrm{Re} (z_1) \leq \mathrm{Re} (z) \leq \mathrm{Re} (z_2), \mathrm{Im} (z) \geq 0  \}.
\end{multline*} 

\begin{definition}[h-convex]
    A set $S \subseteq \C_{\mathrm{Im} \geq 0}$ is h-convex if 
    \begin{equation*}
        z_1, z_2 \in S \iff \operatorname{Arc}_{\operatorname{min}}(z_1, z_2) \subseteq S.
    \end{equation*}
    Given a set of points $P \subseteq \C_{\mathrm{Im} \geq 0}$, the h-convex hull of $P$ is the smallest set $\tilde{P} \supseteq P$ that is h-convex. We denote the h-convex hull as $\Tilde{P}= \hco (P)$.
\end{definition}

For a set $P \subseteq \C$ that is equal to its complex conjugate $\bar{P} = P$, i.e., is symmetric w.r.t. the real axis, h-convexity can be defined for $P_+ := P \cap \C_{\mathrm{Im} \geq 0}$. In that case, the h-convex hull is defined $\hco(P) = \hco(P_+) \cup \overline{\hco(P_+)}$.

\subsection{Scaled Relative Graphs}\label{sec:srg_definitions}

We now turn to the definition and properties of the SRG as introduced by Ryu et al. in~\cite{ryuScaledRelativeGraphs2022}. We follow closely the exposition of the SRG as given by Chaffey et al. in~\cite{chaffeyGraphicalNonlinearSystem2023}. 

\subsubsection{Definitions}

Let $\mathcal{L}$ be a Hilbert space, and $R : \mathcal{L} \to \mathcal{L}$ an operator. The angle between $u, y\in \mathcal{L}$ is defined as 
\begin{equation}\label{eq:def_srg_angle}
    \maketextstyle \angle(u, y) := \cos^{-1} \frac{\mathrm{Re} \inner{u}{y}}{\norm{u} \norm{y}} \in [0, \pi].
\end{equation}
Given $u_1, u_2 \in \mathcal{U} \subseteq \mathcal{L}$, define the set of complex numbers
\begin{equation*}
    \maketextstyle z_R(u_1, u_2) := \left\{ \frac{\norm{Ru_1-Ru_2}}{\norm{u_1-u_2}} e^{\pm j \angle(u_1-u_2, Ru_1-Ru_2)}  \right\}.
\end{equation*}
The SRG of $R$ over the set $\mathcal{U}$ is defined as
\begin{equation}\label{eq:srg_specific_input_space}
    \maketextstyle \SRG_\mathcal{U} (R) := \bigcup_{u_1, u_2 \in \mathcal{U}} z_R(u_1, u_2).
\end{equation}
If $\mathcal{U}=\mathcal{L}$, we simply write $\SRG(R)$. The radius of the SRG corresponds to the incremental induced gain as $\rmin(\SRG(R)) = \Gamma(R)$ in terms of~\eqref{eq:incremental_induced_norm}.

\subsubsection{Operations on SRGs}\label{sec:operations_on_srgs}
The facts presented here are proven in~\cite[Chapter 4]{ryuScaledRelativeGraphs2022}. 

Inversion of a point $z = re^{j \omega} \in \C$ is defined as the M\"obius inversion $r e^{j \omega} \mapsto (1/r)e^{j \omega}$. We refer the reader to~\cite[Ch. 4.4 and 4.5]{ryuScaledRelativeGraphs2022} for the definitions of the chord and arc property. %

\begin{proposition}\label{prop:srg_calculus}
    Let $0 \neq \alpha \in \R$ and let $R, S$ be arbitrary operators on the Hilbert space $\mathcal{L}$. Then, 
    \begin{enumerate}[label=\alph*.]
        \item\label{eq:srg_calculus_alpha} $\SRG(\alpha R) = \SRG(R \alpha) = \alpha \SRG(R)$,
        \item\label{eq:srg_calculus_plus_one} $\SRG(I + R) = 1 + \SRG(R)$, where $I$ denotes the identity on $\mathcal{L}$,
        \item\label{eq:srg_calculus_inverse} $\SRG(R^{-1}) = (\SRG(R))^{-1} =: \SRG(R)^{-1}$.
        \item\label{eq:srg_calculus_parallel} If at least one of $R, S$ satisfies the chord property, then $\SRG(R + S) \subseteq \SRG(R) + \SRG(S)$.
        \item\label{eq:srg_calculus_series} If at least one of $R, S$ satisfies an arc property, then $\SRG(R S) \subseteq \SRG(R) \SRG(S)$.
    \end{enumerate}
    If the SRGs above contain $\infty$ or are the empty set, the above operations are slightly different, see~\cite{ryuScaledRelativeGraphs2022}. 
\end{proposition}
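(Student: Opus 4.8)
The plan is to reconstruct the argument of \cite[Ch.~4]{ryuScaledRelativeGraphs2022}: establish each item at the level of a single pair $u_1,u_2\in\mathcal L$ and then take the union over all pairs, since every assertion is a set (in)equality built from the defining union \eqref{eq:srg_specific_input_space}. Abbreviate $a:=u_1-u_2$, $b:=Ru_1-Ru_2$, $c:=Su_1-Su_2$; recall that $z_R(u_1,u_2)$ is the conjugate pair of modulus $\norm{b}/\norm{a}$ and argument $\pm\angle(a,b)$, where by \eqref{eq:def_srg_angle} the argument sees only $\mathrm{Re}\inner{a}{b}$. Items (a)--(c) are then direct computations. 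For (a): passing from $R$ to $\alpha R$ replaces $b$ by $\alpha b$, scaling the modulus by $|\alpha|$ and, when $\alpha<0$, sending $\angle(a,b)$ to $\pi-\angle(a,b)$ since $\mathrm{Re}\inner{a}{\alpha b}=\alpha\,\mathrm{Re}\inner{a}{b}$; together this is multiplication of the point pair by $\alpha$, and the case $R\alpha$ follows from the reparametrisation $v_i:=\alpha u_i$. For (b): $(I+R)u_1-(I+R)u_2=a+b$, and the elementary identities $|1+w|^2=1+2\,\mathrm{Re}\,w+|w|^2$ and $\mathrm{Re}(1+w)/|1+w|=(\norm{a}^2+\mathrm{Re}\inner{a}{b})/(\norm{a}\,\norm{a+b})=\cos\angle(a,a+b)$, applied to $w\in z_R(u_1,u_2)$, give $1+z_R(u_1,u_2)=z_{I+R}(u_1,u_2)$. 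For (c): substitute $u_i=R^{-1}y_i$; symmetry of $\angle(\cdot,\cdot)$ and inversion of the modulus make $z_{R^{-1}}(y_1,y_2)$ the M\"obius inverse of $z_R(u_1,u_2)$, while $u_1,u_2$ sweep all of $\mathcal L$ as $y_1,y_2$ do.

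Items (d) and (e) are the substantive ones; the obstruction is that $z_R(u_1,u_2)$ keeps only the planar shadow (modulus ratio, real-part angle) of the vectors $a,b,c$, which need not be coplanar, so $z_{R+S}$ is generally not $z_R+z_S$. For (d), let $p$ be the upper representative of $z_{R+S}(u_1,u_2)$. Its real part is $\mathrm{Re}\inner{a}{b+c}/\norm{a}^2=\mathrm{Re}(z_R)+\mathrm{Re}(z_S)$, so for a suitable representative $z_S^{\pm}$ of $z_S(u_1,u_2)$ the difference $p-z_S^{\pm}$ shares its real part with $z_R$ and $\bar z_R$. For the imaginary parts, $|\mathrm{Im}(z_R(u_1,u_2))|=\norm{Pb}/\norm{a}$, where $P$ is the orthogonal projection onto the real-orthogonal complement of $a$, i.e.\ $Pb=b-(\mathrm{Re}\inner{a}{b}/\norm{a}^2)a$; linearity of $P$ and the triangle inequality (applied to $b+c$, and in reverse to $b=(b+c)-c$) give $\big|\,|\mathrm{Im}(z_{R+S})|-|\mathrm{Im}(z_S)|\,\big|\le|\mathrm{Im}(z_R)|$. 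Choosing the sign of $z_S^{\pm}$ accordingly, $p-z_S^{\pm}$ falls on the vertical chord from $z_R$ to $\bar z_R$, which lies in $\SRG(R)$ by the chord property; hence $p\in\SRG(R)+\SRG(S)$, and conjugation handles the lower representative. If instead $\SRG(S)$ has the chord property, swap $R$ and $S$. Union over $u_1,u_2$ gives (d).

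For (e), put $b:=Su_1-Su_2$ and $c:=R(Su_1)-R(Su_2)$, so $z_{RS}(u_1,u_2)$ has modulus $\norm{c}/\norm{a}$ and argument $\pm\angle(a,c)$, while $\SRG(S)$ contains the point of modulus $\norm{b}/\norm{a}$ and argument $\pm\angle(a,b)$, and $\SRG(R)$ the point of modulus $\norm{c}/\norm{b}$ and argument $\pm\angle(b,c)$. The moduli already multiply to $\norm{c}/\norm{a}$, so only the argument must be matched, and the key fact is the spherical triangle inequality for the Hilbert-space angle, $|\angle(a,b)-\angle(b,c)|\le\angle(a,c)\le\angle(a,b)+\angle(b,c)$, which places $\angle(a,c)-\angle(a,b)$ in $[-\angle(b,c),\angle(b,c)]$. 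The (one-sided) arc property of $\SRG(R)$ is exactly what keeps, inside $\SRG(R)$, the radius-$\norm{c}/\norm{b}$ arc through the two points of argument $\pm\angle(b,c)$; selecting on it the point of argument $\angle(a,c)-\angle(a,b)$ and multiplying by the point of $\SRG(S)$ of modulus $\norm{b}/\norm{a}$ and argument $\angle(a,b)$ reproduces the upper representative of $z_{RS}(u_1,u_2)$, and conjugation the lower one. Union over $u_1,u_2$ gives (e), with $R,S$ interchangeable as before.

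I expect the real work to be these last two geometric-interpolation steps: pinning down, with correct sign bookkeeping, that the point needed to realise $z_{R+S}$ as a sum (resp.\ $z_{RS}$ as a product) always lies on the chord (resp.\ arc) supplied by the chord (resp.\ arc) property, and invoking the right form of the angle triangle inequality. One must also handle the degenerate cases flagged after the statement --- $u_1=u_2$, an empty SRG, or points at $\infty$ --- where the affine/M\"obius operations and the set operations $A+B,\ AB$ follow the conventions of \cite{ryuScaledRelativeGraphs2022}.
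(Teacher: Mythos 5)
Your reconstruction is correct and follows essentially the same route as the source to which the paper defers for this result (\cite[Ch.~4]{ryuScaledRelativeGraphs2022}); the paper itself offers no proof beyond that citation. The pairwise computations for items (a)--(c), the chord argument for (d) via matching real parts and bounding $\big|\,\|P(b+c)\|-\|Pc\|\,\big|\le\|Pb\|$ with the projection $Pb=b-(\mathrm{Re}\inner{a}{b}/\norm{a}^2)a$, and the angle triangle inequality combined with arc interpolation for (e) are exactly the standard argument, including the degenerate-case caveats.
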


\subsubsection{Stability analysis}

SRGs serve as a tool to compute the (incremental) $L_2$-gain of a system. We cite the central result from~\cite{chaffeyGraphicalNonlinearSystem2023} (corrected in~\cite{chaffeyHomotopyTheoremIncremental2024}), which considers any system $H_1$ in feedback with another system $H_2$, as displayed in Fig.~\ref{fig:chaffey_thm2}. 

\begin{proposition}\label{thm:chaffey_thm2}
    Let $H_1, H_2$ be operators on $L_{2e}$, where $\Gamma(H_1) < \infty, \Gamma(H_2) < \infty$ and for all $\tau \in (0, 1]$
    \begin{equation*}
        \dist(\SRG(H_1)^{-1}, -\tau \SRG(H_2)) \geq r_m >0,
    \end{equation*}
    and at least one of $H_1, H_2$ obeys the chord property. Then, the system $T=(H_1^{-1}+H_2)^{-1}$ in Fig.~\ref{fig:chaffey_thm2} obeys $\Gamma(T) \leq 1/r_m$.
\end{proposition}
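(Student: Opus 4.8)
The plan is to exploit the homotopy structure behind the SRG separation condition, exactly as in the incremental homotopy argument of~\cite{chaffeyHomotopyTheoremIncremental2024}, rather than trying to bound $\Gamma(T)$ directly. First I would rewrite the closed loop: with $y = Tw$, the interconnection in Fig.~\ref{fig:chaffey_thm2} encodes $e = w - H_2 y$ and $y = H_1 e$, so that $H_1^{-1} y + H_2 y = w$, i.e. $T = (H_1^{-1} + H_2)^{-1}$ as claimed, and the relevant object is the operator $G_\tau := H_1^{-1} + \tau H_2$ for $\tau \in [0,1]$, with $G_1 = T^{-1}$. I would show that the separation hypothesis $\dist(\SRG(H_1)^{-1}, -\tau\SRG(H_2)) \geq r_m > 0$ forces, for each fixed $\tau$, a uniform lower bound $\Gamma((G_\tau)^{-1})^{-1} \geq r_m$ whenever $G_\tau$ is invertible with bounded inverse; this is the quantitative content of the SRG, since $0 \in \SRG(G_\tau)$ would contradict the distance bound. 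Concretely, for any $u_1 \neq u_2$, the point $z_{G_\tau}(u_1,u_2)$ lies in $\SRG(H_1^{-1} + \tau H_2)$, which by the chord property of one of the factors (hence of $H_1^{-1}$ or $H_2$) and Proposition~\ref{prop:srg_calculus}\ref{eq:srg_calculus_parallel}--\ref{eq:srg_calculus_alpha} is contained in $\SRG(H_1^{-1}) + \tau\SRG(H_2) = \SRG(H_1)^{-1} + \tau\SRG(H_2)$; the hypothesis says every point of this set has modulus $\geq r_m$, so $\|G_\tau u_1 - G_\tau u_2\| \geq r_m \|u_1 - u_2\|$.

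Next I would run the homotopy in $\tau$. At $\tau = 0$ we have $G_0 = H_1^{-1}$, which is invertible with $\Gamma(G_0^{-1}) = \Gamma(H_1) < \infty$, so the loop is well-posed and the inverse is bounded. The inequality of the previous paragraph shows each $G_\tau$ is injective with a closed, boundedly-invertible range on its image, uniformly in $\tau$; combined with continuity of $\tau \mapsto G_\tau$ (it is affine in $\tau$) and the $L_{2e}$ causality/well-posedness machinery, standard homotopy/continuity arguments (as in~\cite{chaffeyHomotopyTheoremIncremental2024}, correcting~\cite{chaffeyGraphicalNonlinearSystem2023}) propagate invertibility of $G_\tau$ from $\tau = 0$ to $\tau = 1$ without the bound on $\Gamma(G_\tau^{-1})$ ever blowing up. At $\tau = 1$ this yields that $T = G_1^{-1}$ exists as a bounded causal operator on $L_{2e}$, and the uniform estimate $\|G_1 u_1 - G_1 u_2\| \geq r_m\|u_1 - u_2\|$ rewrites, after substituting $u_i = T w_i$, as $\|w_1 - w_2\| \geq r_m \|Tw_1 - Tw_2\|$, i.e. $\Gamma(T) \leq 1/r_m$.

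The main obstacle is the homotopy step itself: establishing that invertibility (well-posedness) of the feedback loop is preserved along the path $\tau \in [0,1]$. The pointwise SRG separation gives a \emph{uniform} lower incremental bound on $G_\tau$, which prevents the inverse from becoming unbounded, but one still must rule out that the range of $G_\tau$ fails to be all of $L_{2e}$ at some intermediate $\tau$ — a genuinely topological fact requiring the connectedness of $[0,1]$, the affine (hence continuous) dependence of $G_\tau$ on $\tau$, and causality to localize the argument on truncations $P_T$. This is precisely the subtlety that the erratum~\cite{chaffeyHomotopyTheoremIncremental2024} addresses, so I would cite that homotopy theorem for this step rather than reprove it, and spend the bulk of the written proof on the (routine) SRG-calculus bound and the final substitution.
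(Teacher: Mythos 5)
Your plan is correct and follows essentially the same route as the source this paper relies on: Proposition~\ref{thm:chaffey_thm2} is stated here without proof as a citation of~\cite{chaffeyGraphicalNonlinearSystem2023} as corrected by~\cite{chaffeyHomotopyTheoremIncremental2024}, and your argument (SRG-calculus containment $\SRG(H_1^{-1}+\tau H_2)\subseteq \SRG(H_1)^{-1}+\tau\SRG(H_2)$ giving the uniform lower bound $\|G_\tau u_1-G_\tau u_2\|\geq r_m\|u_1-u_2\|$, followed by the homotopy in $\tau$ to propagate well-posedness from $\tau=0$ to $\tau=1$) is exactly the argument of those references. You also correctly identify the surjectivity-along-the-homotopy step as the delicate point that the erratum addresses, and deferring to that theorem rather than reproving it is the appropriate choice.
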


\begin{figure}[t]
    \centering

    \tikzstyle{block} = [draw, rectangle, 
    minimum height=2em, minimum width=2em]
    \tikzstyle{sum} = [draw, circle, scale=0.5, node distance={0.5cm and 0.5cm}]
    \tikzstyle{input} = [coordinate]
    \tikzstyle{output} = [coordinate]
    \tikzstyle{pinstyle} = [pin edge={to-,thin,black}]
    
    \begin{tikzpicture}[auto, node distance = {0.3cm and 0.5cm}]
        \node [input, name=input] {};
        \node [sum, right = of input] (sum) {};
        \node [block, right = of sum] (lti) {$H_1$};
        \node [coordinate, right = of lti] (z_intersection) {};
        \node [output, right = of z_intersection] (output) {}; %
        \node [block, below = of lti] (static_nl) {$H_2$};
    
        \draw [->] (input) -- node {$r$} (sum);
        \draw [->] (sum) -- node {$e$} (lti);
        \draw [->] (lti) -- node [name=z] {$y$} (output);
        \draw [->] (z) |- (static_nl);
        \draw [->] (static_nl) -| node[pos=0.99] {$-$} (sum);
    \end{tikzpicture}
    
    \caption{Block diagram of a general feedback interconnection where $H_1$ and $H_2$ can be LTI or NL static or dynamic blocks.}
    \label{fig:chaffey_thm2}
    \vspace{-1em}
\end{figure}
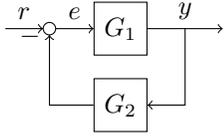

Note that Proposition~\ref{thm:chaffey_thm2} works only in the case of stable open-loop plants $H_1$. In~\cite{krebbekxScaledRelativeGraph2024}, Proposition~\ref{thm:chaffey_thm2} has been extended to the case where $H_1$ is now an unstable LTI operator.

\section{Frequency-Domain analysis of Incrementally Stable Systems}\label{sec:freq_domain_analysis_of_incremental_systems}

To formulate an effective frequency-domain interpretation of NL systems, the core idea of this work is to focus on systems for which periodic inputs lead to periodic outputs. Sufficient conditions for period preservation are incremental stability~\cite{angeliLyapunovApproachIncremental2002}, convergent systems~\cite{pavlovUniformOutputRegulation2006} or fading memory~\cite{boydFadingMemoryProblem1985}, to name a few. In other works, e.g.~\cite{nuijHigherorderSinusoidalInput2006}, period preservation is simply assumed. This property is used to connect the incremental gain of a system to the gain of the periodic output. Particularly, the SRG is used as the direct computational tool to compute this gain for different frequency-dependent input spaces, yielding a Bode diagram for nonlinear systems.

\subsection{The Period Preserving Property}

The fundamental property that a periodic input leads to an eventually periodic output, called the \emph{period preserving property}, is defined as follows. 

\begin{definition}\label{def:T-periodic}
    A signal $x \in \Lte$ is called $T$-periodic for some $T>0$, if $x(t+T)=x(t)$ for all $t \geq 0$.
\end{definition}

\begin{definition}\label{def:period-preserving-property}
    An operator $R : \Lte \to \Lte$ is called period preserving if for every input $u \in \Lte$ that is $T$-periodic, the output $y=Ru$ converges asymptotically to a $T$-periodic signal $\tilde{y}$, i.e. $\forall \epsilon>0 , \exists t>0 : \forall \tau \geq t, \; |y(\tau)-\tilde{y}(\tau)| < \epsilon$.
\end{definition}

We consider one period of the input $u$ and periodic output $\tilde{y}$ as signals in $L_2[0,T]$, and compute their RMS norms using~\eqref{eq:rms_fourier_L2}. The following lemma relates the incremental gain to the RMS norm of a period preserving operator.

\begin{lemma}\label{lemma:incremental_gain_rms_gain_relation}
    Let $R : \Lte \to \Lte$ be causal and period preserving with \mbox{$\Gamma(R) < \infty$}. For any pair $u_1,u_2 \in \Lte$ of $T$-periodic inputs 
    \begin{equation}\label{eq:rms_L2_gain_eq}
        \lim_{t \to \infty} \frac{\norm{P_t(Ru_1 - Ru_2)}}{\norm{P_t(u_1-u_2)}} = \frac{\norm{\Tilde{y_1}-\Tilde{y_2}}_\mathrm{RMS}}{\norm{u_1-u_2}_\mathrm{RMS}},
    \end{equation}
    and consequently
    \begin{equation}\label{eq:rms_gain_bound}
        \sup_{u_1,u_2 \in L_2[0,T]} \frac{\norm{\Tilde{y_1}-\Tilde{y_2}}_\mathrm{RMS}}{\norm{u_1-u_2}_\mathrm{RMS}} \leq \Gamma(R),
    \end{equation}
    allowing to interpret the incremental gain as the RMS gain.
\end{lemma}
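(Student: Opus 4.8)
The plan is to prove the limit identity~\eqref{eq:rms_L2_gain_eq} first and then derive~\eqref{eq:rms_gain_bound} as an immediate consequence. For the limit, fix $T$-periodic inputs $u_1, u_2 \in \Lte$ and write $\Delta u := u_1 - u_2$, $\Delta y := R u_1 - R u_2$. Since $R$ is period preserving, each $y_i = R u_i$ converges asymptotically to a $T$-periodic signal $\tilde y_i$, so $\Delta y$ converges asymptotically to the $T$-periodic signal $\Delta \tilde y := \tilde y_1 - \tilde y_2$. The core observation is that for a signal $w \in \Lte$ that converges asymptotically to a $T$-periodic signal $\tilde w$, the truncated $L_2$-norm grows like $\sqrt{t}$ at rate $\|\tilde w\|_{\mathrm{RMS}}$; precisely, $\lim_{t\to\infty} \|P_t w\| / \sqrt{t} = \|\tilde w\|_{\mathrm{RMS}}$. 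Granting this, I would apply it once to the numerator with $w = \Delta y$ and once to the denominator with $w = \Delta u$ (where $\Delta u$ is itself exactly $T$-periodic, so $\tilde w = \Delta u$ and $\|P_t \Delta u\|/\sqrt t \to \|\Delta u\|_{\mathrm{RMS}}$ directly from~\eqref{eq:rms_fourier_L2}), and take the ratio of the two limits. One needs $\|\Delta u\|_{\mathrm{RMS}} \neq 0$; the case $\Delta u = 0$ is trivial (both sides interpreted as $0$ or the statement vacuous), and $\Gamma(R) < \infty$ guarantees $\Delta y = 0$ then as well.

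The key auxiliary claim — $\lim_{t\to\infty}\|P_t w\|/\sqrt t = \|\tilde w\|_{\mathrm{RMS}}$ for $w$ asymptotically $T$-periodic with periodic limit $\tilde w$ — is where the real work sits, and I expect it to be the main obstacle. I would prove it by splitting $w = \tilde w + e$ where $e := w - \tilde w \to 0$ asymptotically in the pointwise sense of Definition~\ref{def:period-preserving-property}. For the periodic part, over $t = NT$ one has $\|P_{NT}\tilde w\|^2 = N \|\tilde w\|_{L_2[0,T]}^2 = NT \|\tilde w\|_{\mathrm{RMS}}^2$ by~\eqref{eq:rms_fourier_L2}, and for general $t$ one interpolates between consecutive multiples of $T$ using that $\tilde w \in L_2[0,T]$ has finite energy per period, giving $\|P_t \tilde w\|^2 / t \to \|\tilde w\|_{\mathrm{RMS}}^2$. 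The subtle point is controlling the cross term $2\,\mathrm{Re}\inner{P_t \tilde w}{P_t e}$ and the error energy $\|P_t e\|^2$: from $|e(\tau)| \to 0$ one gets, for any $\epsilon > 0$, a time $t_\epsilon$ beyond which $|e(\tau)| < \epsilon$, so $\|P_t e\|^2 \leq \|P_{t_\epsilon} e\|^2 + \epsilon^2 (t - t_\epsilon)$, hence $\limsup_t \|P_t e\|^2/t \leq \epsilon^2$, and since $\epsilon$ is arbitrary, $\|P_t e\|^2/t \to 0$. Cauchy--Schwarz then kills the cross term: $|\inner{P_t\tilde w}{P_t e}| \leq \|P_t \tilde w\|\,\|P_t e\| = o(t)$. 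Combining, $\|P_t w\|^2/t \to \|\tilde w\|_{\mathrm{RMS}}^2$, and taking square roots gives the claim. A minor technical caveat: one must ensure $e \in \Lte$, which follows since $w, \tilde w \in \Lte$ (the latter because periodic signals lie in $\Lte$), so the truncated norms are all finite.

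With the auxiliary claim in hand, \eqref{eq:rms_L2_gain_eq} follows by dividing the two limits, using that $\Delta u$ is $T$-periodic so its periodic limit is itself. Finally, for~\eqref{eq:rms_gain_bound}: for any $u_1, u_2 \in L_2[0,T]$, extend them $T$-periodically to elements of $\Lte$; causality lets the incremental bound on $L_2$ transfer to truncated signals, so $\|P_t \Delta y\| \leq \Gamma(R)\,\|P_t \Delta u\|$ for every $t$ (using $P_t \Delta u \in L_2$ and $\|P_t(R P_t u)\| \le \|R P_t u\| \le \Gamma(R)\|P_t u\| + \text{const}$, with the additive constant vanishing in the ratio as $t\to\infty$ — or more cleanly, apply the incremental bound directly to the $L_2$ signals $P_t u_1, P_t u_2$ and invoke causality as the paper does after~\eqref{eq:incremental_induced_norm}). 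Dividing by $\|P_t \Delta u\|$ and letting $t \to \infty$, the left side tends to $\|\Delta \tilde y\|_{\mathrm{RMS}} / \|\Delta u\|_{\mathrm{RMS}}$ by~\eqref{eq:rms_L2_gain_eq}, while the right side is $\Gamma(R)$; taking the supremum over all such pairs yields~\eqref{eq:rms_gain_bound}. The interpretation as an RMS gain is then just a restatement.
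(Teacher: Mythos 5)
Your proof is correct, but it packages the key estimate differently from the paper. The paper sandwiches the truncated norm ratio directly: it splits $[0,t+t^2]$ into a head $[0,t]$ and a long tail $[t,t+t^2]$, controls the head by $\norm{Ru_1-Ru_2}_{L_2[0,t]}\leq \Gamma(R)\norm{u_1-u_2}_{L_2[0,t]}$ (which grows only like $\sqrt{t}$ and is washed out by the $t+t^2$ window), bounds the tail by the periodic limit plus an $\epsilon(t)\,t$ error, and then runs the triangle inequality in both directions to squeeze the ratio onto $\norm{\tilde y_1-\tilde y_2}_{\mathrm{RMS}}/\norm{u_1-u_2}_{\mathrm{RMS}}$. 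You instead isolate a clean auxiliary lemma — $\norm{P_t w}/\sqrt{t}\to\norm{\tilde w}_{\mathrm{RMS}}$ for any asymptotically $T$-periodic $w\in\Lte$ — proved via the energy decomposition $\norm{P_t w}^2=\norm{P_t\tilde w}^2+2\,\mathrm{Re}\inner{P_t\tilde w}{P_t e}+\norm{P_t e}^2$ with $\norm{P_t e}^2=o(t)$ and Cauchy--Schwarz on the cross term. The underlying mechanism is identical (pointwise tail convergence makes the error energy sublinear while the periodic energy grows linearly), but your route has two small advantages: the limit identity~\eqref{eq:rms_L2_gain_eq} is established without invoking $\Gamma(R)$ at all (only $\Lte$-membership of the transient), and you handle arbitrary $t\to\infty$ rather than only $t=nT$ as the paper implicitly does. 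The paper's version is more self-contained in that it never names the auxiliary fact, at the cost of the somewhat opaque $[t,t+t^2]$ window trick. Your derivation of~\eqref{eq:rms_gain_bound} from causality and~\eqref{eq:incremental_induced_norm} matches the paper's.
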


\begin{proof}%
    Fix a period $T>0$, $T$-periodic inputs $u_1,u_2$ and corresponding outputs $y_1,y_2$ which have $T$-periodic limits $\tilde{y}_1,\tilde{y}_2$ due to the period preserving property. Take an $\epsilon>0$ and take $t = nT$ for some $n \in \N$ such that for all $\tau \geq t$, $|y_i(\tau)-\tilde{y}_i(t)|< \epsilon =: \epsilon(t)$ holds for $i=1,2$. Since
    \begin{multline*}
        \norm{Ru_1 - Ru_2}_{L_2[0,t+t^2]}\leq \norm{Ru_1 - Ru_2}_{L_2[0,t]} + \\ \norm{\tilde{y}_1-\tilde{y}_2}_{L_2[t,t+t^2]} + \textstyle \sum_{i=1}^2 \norm{Ru_i-\tilde{y}_i}_{L_2[t,t+t^2]}, %
    \end{multline*}
    where $\norm{Ru_1-Ru_2}_{L_2[0,t]} \leq \Gamma(R) \norm{u_1-u_2}_{L_2[0,t]} = \Gamma(R) \norm{u_1-u_2} \sqrt{\tfrac{t}{T}}$, $\norm{\tilde{y}_1-\tilde{y}_1}_{L_2[t,t+t^2]}  = \norm{\tilde{y}_1-\tilde{y_2}} \tfrac{t}{T}$ and $\norm{Ru_i-\tilde{y}_i}_{L_2[t,t+t^2]} \leq \epsilon(t) t$ for $i=1,2$, and $\norm{\cdot}:=\norm{\cdot}_{L_2[0,T]}$ for brevity. As $t=nT \xrightarrow[]{n \to \infty} \infty$, we get
    \begin{multline*}
        \textstyle \frac{\norm{Ru_1-Ru_2}_{L_{2\mathrm{e}}}}{\norm{u_1-u_2}_{L_{2\mathrm{e}}}} = \lim_{t \to \infty} \frac{\norm{Ru_1-Ru_2}_{L_2[0,t+t^2]}}{\norm{u_1-u_2}_{L_2[0,t+t^2]}}  \\ \textstyle \leq  \lim_{t \to \infty} \frac{\Gamma(R) \norm{u_1-u_2} \sqrt{\frac{t}{T}} + \norm{\tilde{y}_1-\tilde{y}_2} \frac{t}{T} + 2\epsilon(t)t}{\norm{u_1-u_2} \sqrt{\frac{t}{T}+(\frac{t}{T})^2}} %
        \\ \textstyle \leq  \lim_{t \to \infty} \frac{\Gamma(R) \norm{u_1-u_2} \sqrt{\frac{T}{t}} + \norm{\tilde{y}_1-\tilde{y}_2}+ 2\epsilon(t)/T}{\norm{u_1-u_2} \sqrt{1+T/t}} = \frac{\norm{\tilde{y}_1-\tilde{y}_2}}{\norm{u_1-u_2}},
    \end{multline*}
    since if $n \to \infty$ in $t=nT$, one has $\epsilon(t) \to 0$.
    Similarly, for the opposite direction it holds
    \begin{multline*}
        \textstyle \norm{\tilde{y}_1-\tilde{y}_2}_{L_2[0,t+t^2]} \leq \norm{\tilde{y}_1-\tilde{y}_2}_{L_2[0,t]} \\ \textstyle + \norm{Ru_1-Ru_2}_{L_2[t,t+t^2]} + \sum_{i=1}^2  \norm{Ru_i-\tilde{y}_i}_{L_2[t,t+t^2]},
    \end{multline*}
    so we can conclude $\frac{\norm{\tilde{y}_1-\tilde{y}_2}}{\norm{u_1-u_2}} \leq \frac{\norm{Ru_1-Ru_2}_{L_{2\mathrm{e}}}}{\norm{u_1-u_2}_{L_{2\mathrm{e}}}} \leq \frac{\norm{\tilde{y}_1-\tilde{y}_2}}{\norm{u_1-u_2}} $ and hence $\frac{\norm{Ru_1-Ru_2}_{L_{2\mathrm{e}}}}{\norm{u_1-u_2}_{L_{2\mathrm{e}}}} = \frac{\norm{\tilde{y}_1-\tilde{y}_2}}{\norm{u_1-u_2}}$, which yields~\eqref{eq:rms_L2_gain_eq} by using $\norm{u}_{L_2[0,T]} = \sqrt{T} \norm{u}_\mathrm{RMS}$. Note that~\eqref{eq:rms_gain_bound} follows immediately from~\eqref{eq:incremental_induced_norm} and~\eqref{eq:rms_L2_gain_eq}.
\end{proof}

This lemma serves as the cornerstone of our frequency-domain analysis of NL systems. Note that~\eqref{eq:rms_gain_bound} is equivalent to the $\Hinf$-norm if $R$ is an LTI operator (see~\cite[A.5.7]{skogestadMultivariableFeedbackControl2010}).

\begin{remark}
    Lemma~\ref{lemma:incremental_gain_rms_gain_relation} can be stated in the \emph{non-incremental} setting by taking $u_2=0$. The resulting gain is the non-incremental gain, and the computations are done using the Scaled Graph (SG) instead. Details on non-incremental analysis and the SG can be found in~\cite{krebbekxScaledRelativeGraph2025,krebbekxGraphicalAnalysisNonlinear2025,vandeneijndenScaledGraphsReset2024,chaffeyGraphicalNonlinearSystem2023}.
\end{remark}

\subsection{Nonlinear Bode Plots}

When an input has period $T$, we call $\omega = 2\pi/T$ the \emph{base harmonic}. The key idea is now to compute the gain in the left-hand side of~\eqref{eq:rms_gain_bound} for a specific space of input signals that corresponds to a given base harmonic. 

\begin{definition}
    For a frequency $\omega \in \R_{>0}$, we define
    \begin{subequations}\label{eqs:input_spaces}
    \begin{align}
        \mathcal{U}_\omega &:= \maketextstyle  \{ u\in \Lte \mid u(t) =  a \sin (\omega t + \phi), \, a,\phi \in \R \}, \\
        \overline{\mathcal{U}}_\omega &:= \maketextstyle \{ u\in \Lte \mid u(t) = \sum_{n \in \Z} \hat u_n e^{j \omega n t} \}, \\
        \underline{\mathcal{U}}_\omega &:= \maketextstyle \{  u\in \Lte \mid u(t) = \sum_{0\neq n \in \Z} \hat u_n e^{j (\omega / n) t}  \},
    \end{align}
    \end{subequations}
    which are called the sinusoidal, harmonic and subharmonic input spaces, respectively. The input space specific gains are defined as
    \begin{subequations}\label{eqs:Gammas}
    \begin{align}
        \Gamma_\omega(R) &= \maketextstyle \sup_{u_1,u_2 \in \mathcal{U}_\omega} \frac{\norm{\tilde{y}_1-\tilde{y}_2}_\mathrm{RMS}}{\norm{u_1-u_2}_\mathrm{RMS}}, \label{eq:Gamma_sinus} \\
        \overline{\Gamma}_\omega(R) &= \maketextstyle \sup_{u_1,u_2 \in \overline{\mathcal{U}}_\omega} \frac{\norm{\tilde{y}_1-\tilde{y}_2}_\mathrm{RMS}}{\norm{u_1-u_2}_\mathrm{RMS}}, \label{eq:Gamma_harmonic} \\
        \textstyle \underline{\Gamma}_\omega(R) &= \maketextstyle \sup_{u_1,u_2 \in \underline{\mathcal{U}}_\omega} \frac{\norm{\tilde{y}_1-\tilde{y}_2}_\mathrm{RMS}}{\norm{u_1-u_2}_\mathrm{RMS}}. \label{eq:Gamma_subharmonic}
    \end{align}
    \end{subequations}
\end{definition}

Note that since $\mathcal{U}_\omega \subseteq \overline{\mathcal{U}}_\omega \subseteq \Lte$ and $\mathcal{U}_\omega \subseteq \underline{\mathcal{U}}_\omega \subseteq \Lte$ for all $\omega \in [0, \infty)$, it holds that 
\begin{equation}\label{eq:sinus_harmonic_gain_inequalities}
    \Gamma_\omega(R) \leq \overline{\Gamma}_\omega(R) \leq \Gamma(R), \quad \Gamma_\omega(R) \leq \underline{\Gamma}_\omega(R) \leq \Gamma(R).
\end{equation}
Moreover, because $\lim_{\omega \to \infty} \underline{\mathcal{U}}_\omega = \lim_{\omega \downarrow 0} \overline{\mathcal{U}}_\omega = \Lte$, we can conclude that 
\begin{equation}\label{eq:harm_subharm_limit}
    \maketextstyle \lim_{\omega \to \infty} \underline{\Gamma}_\omega (R) = \lim_{\omega \downarrow 0} \overline{\Gamma}_\omega(R) = \Gamma(R). 
\end{equation}

The sinusoidal inputs are often used in experiments, and the harmonic inputs correspond to periodic setpoints. The subharmonic inputs occur less frequently in practice, and serve the purpose to probe the low-frequency behavior of the system.

The \emph{NL Bode plot} is obtained by plotting~\eqref{eq:Gamma_sinus}, \eqref{eq:Gamma_harmonic} and \eqref{eq:Gamma_subharmonic} as function of $\omega \in \R_{>0}$ in one graph, analogously to a conventional Bode plot. To the best knowledge of the authors, the \emph{NL Bode plot} provides a novel way to study the frequency-domain behavior of nonlinear operators that are causal and period preserving. We will compare our result to existing methods in Section~\ref{sec:comparison_with_existing_methods}.

\section{Nonlinear Bode Diagrams using Scaled Relative Graphs}\label{sec:nl_bode_using_srg}

\subsection{The Feedback Interconnection for Loop Shaping}\label{sec:lfr_simple_feedback}

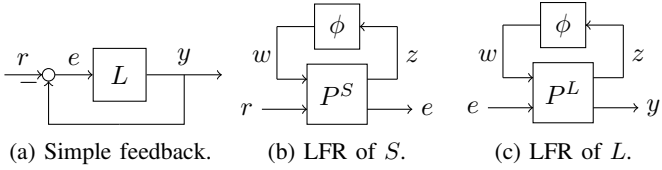
\begin{figure}[t]
    \centering
    \begin{subfigure}[t]{0.32\linewidth}
        \centering
        \tikzstyle{block} = [draw, rectangle, 
        minimum height=2em, minimum width=2em]
        \tikzstyle{sum} = [draw, circle, scale=0.5, node distance={0.5cm and 0.5cm}]
        \tikzstyle{input} = [coordinate]
        \tikzstyle{output} = [coordinate]
        \tikzstyle{pinstyle} = [pin edge={to-,thin,black}]
        
        \begin{tikzpicture}[auto, node distance = {0.3cm and 0.5cm}]
            \node [input, name=input] {};
            \node [sum, right = of input] (sum) {};
            \node [block, right = of sum] (lti) {$L$};
            \node [coordinate, right = of lti] (z_intersection) {};
            \node [output, right = of z_intersection] (output) {}; %
            \node [coordinate, below = of lti] (static_nl) {};
        
            \draw [->] (input) -- node {$r$} (sum);
            \draw [->] (sum) -- node {$e$} (lti);
            \draw [->] (lti) -- node [name=z] {$y$} (output);
            \draw [-] (z) |- (static_nl);
            \draw [->] (static_nl) -| node[pos=0.99] {$-$} (sum);
        \end{tikzpicture}
        \caption{Simple feedback.}
        \label{fig:simple_feedback}
    \end{subfigure}
    \hfill
    \begin{subfigure}[t]{0.32\linewidth}
     \centering
        \begin{tikzpicture}
            \draw[black] (-.4,-.4) rectangle (.4,.4);
            \node at (0,0) {$P^S$};

            \draw[black] (-.3,-.3+.9) rectangle (.3,.3+.9);
            \node at (0,.9) {$\phi$};

            \draw[->] (-1,-.2) -- (-.4,-.2);
            \node at (-1.2,-.2) {$r$};

            \draw[->] (.4,-.2) -- (1,-.2);
            \node at (1.2,-.2) {$e$};

            \draw[->] (-.8,.2) -- (-.4,.2);
            \draw[-] (.4,.2) -- (.8,.2) ;

            \draw[-] (-.8,.2) -- (-.8,.9);
            \draw[-] (.8,.2) -- (.8,.9);

            \draw[->] (.8,.9) -- (.3,.9);
            \draw[-] (-.8,.9) -- (-.3,.9);

            \node at (-1,.5) {$w$};
            \node at (1,.5) {$z$};
        \end{tikzpicture}
        \caption{LFR of $S$.}
        \label{fig:lfr_S}
    \end{subfigure}
    \hfill
    \begin{subfigure}[t]{0.32\linewidth}
     \centering
        \begin{tikzpicture}
            \draw[black] (-.4,-.4) rectangle (.4,.4);
            \node at (0,0) {$P^L$};

            \draw[black] (-.3,-.3+.9) rectangle (.3,.3+.9);
            \node at (0,.9) {$\phi$};

            \draw[->] (-1,-.2) -- (-.4,-.2);
            \node at (-1.2,-.2) {$e$};

            \draw[->] (.4,-.2) -- (1,-.2);
            \node at (1.2,-.2) {$y$};

            \draw[->] (-.8,.2) -- (-.4,.2);
            \draw[-] (.4,.2) -- (.8,.2) ;

            \draw[-] (-.8,.2) -- (-.8,.9);
            \draw[-] (.8,.2) -- (.8,.9);

            \draw[->] (.8,.9) -- (.3,.9);
            \draw[-] (-.8,.9) -- (-.3,.9);

            \node at (-1,.5) {$w$};
            \node at (1,.5) {$z$};
        \end{tikzpicture}
        \caption{LFR of $L$.}
        \label{fig:lfr_L}
    \end{subfigure}
    \caption{Interconnections for the analysis of simple feedback.}
    \label{fig:interconnections}
    \vspace{-1em}
\end{figure}

Analogously to the Nyquist criterion, we are interested in the simple feedback interconnection as shown in Fig.~\ref{fig:simple_feedback}, where $L: \Lte \to \Lte$ is the loop transfer: $y = Le$. For loop shaping, we consider both the loop transfer, and the sensitivity $e=Sr$ given by $S = (1+L)^{-1}$. To simplify the analysis, we focus on SISO systems with only one nonlinearity $\phi$ in the interconnection, which can be static or dynamic, and for which an SRG bound is available. We can write both $L$ and $S$ in LFR form using $w = \phi (z)$ and
\begin{equation*}
    \begin{pmatrix}
        z \\ y
    \end{pmatrix} = \underbrace{
    \begin{pmatrix}
        P_\mathrm{zw}^L & P_\mathrm{ze}^L \\ 
        P_\mathrm{yw}^L & P_\mathrm{ye}^L
    \end{pmatrix} }_{=: P^L}
    \begin{pmatrix}
        w \\ e
    \end{pmatrix}, \quad 
    \begin{pmatrix}
        z \\ e
    \end{pmatrix} = \underbrace{
    \begin{pmatrix}
        P_\mathrm{zw}^S & P_\mathrm{zr}^S \\ 
        P_\mathrm{ew}^S & P_\mathrm{er}^S
    \end{pmatrix} }_{=: P^S}
    \begin{pmatrix}
        w \\ r
    \end{pmatrix},
\end{equation*}
as depicted in Figs.~\ref{fig:lfr_L} and \ref{fig:lfr_S}, respectively,
which results in the operators
\begin{subequations}\label{eq:lfr}
\begin{align}
    S &= P_\mathrm{ew}^S (\phi^{-1} - P_\mathrm{zw}^S)^{-1} P_\mathrm{zr}^S + P_\mathrm{er}^S, \label{eq:lfr_S} \\
    L &= P_\mathrm{yw}^L (\phi^{-1} - P_\mathrm{zw}^L)^{-1} P_\mathrm{ze}^L + P_\mathrm{ye}^L , \label{eq:lfr_L}
\end{align} 
\end{subequations}
where all operators $P$ are SISO and LTI, and $P_\mathrm{yw}^L, P_\mathrm{ze}^L, P_\mathrm{ye}^L, P_\mathrm{ew}^S, P_\mathrm{zr}^S, P_\mathrm{er}^S$ are assumed to be stable. The stability of both operators $L,S$ depends only on $(\phi^{-1} - P^\bullet_\mathrm{zw})^{-1}$, where $\bullet\in \{S, L\}$, which can be analyzed using Proposition~\ref{thm:chaffey_thm2} by picking $H_1=\phi, H_2 = -P_\mathrm{zw}$. If $P^\bullet_\mathrm{zw}$ is unstable, one must use the SRG method for unstable systems from~\cite{krebbekxScaledRelativeGraph2024}. Period preservation of $S$ and $L$ is required for the method to work.

\subsection{Nonlinear Bandwidth}

Now that we have a definition of the NL Bode diagram, in terms of~\eqref{eqs:Gammas}, we can also define a \emph{bandwidth} (BW) for the NL loop transfer $L$ and sensitivity $S$. The following two definitions are entirely analogous to the LTI case~\cite{skogestadMultivariableFeedbackControl2010}.

\begin{definition}\label{def:closed_loop_bandwidth}
    The closed-loop bandwidth is the smallest value $\omega_B$ such that $|\underline{\Gamma}_{\omega_B}(S)|$ crosses $1/\sqrt{2}$ from below.
\end{definition}

\begin{definition}\label{def:open_loop_bandwidth}
    The open-loop bandwidth is the smallest value $\omega_c$ such that $|\overline{\Gamma}_{\omega_c}(L)|$ crosses $1$ from above.
\end{definition}

When $S$ is LTI, $\omega_B$ yields the LTI definition~\cite[Def. 2.1]{skogestadMultivariableFeedbackControl2010}. If $L$ is LTI, $\omega_c$ yields an \emph{upper bound} for the LTI crossover frequency~\cite[p. 39]{skogestadMultivariableFeedbackControl2010}, and is equal to the LTI definition if $L(s)$ crosses $1$ once, or $\Gamma_\omega$ is used instead of $\overline{\Gamma}_\omega$.

We will show later through an example why these definitions make sense and what practical use they represent.

\subsection{Frequency-Domain Analysis}

SRG analysis is used for computing the frequency-wise norms in~\eqref{eqs:Gammas} for the sensitivity $S$ and loop transfer $L$. More precisely, one can use the SRG restricted to a specific input space $\SRG_\mathcal{U}(S), \SRG_\mathcal{U}(L)$ in~\eqref{eq:srg_specific_input_space}, where now the input space is $\mathcal{U} \in \{\mathcal{U}_\omega, \overline{\mathcal{U}}_\omega, \underline{\mathcal{U}}_\omega \}$, as defined in~\eqref{eqs:input_spaces}. %

\subsubsection{The SRG of an LTI operator for a specific input space}

The first step is to consider to what input spaces different operators map to. LTI operators map each of the spaces $\mathcal{U} \in \{\mathcal{U}_\omega, \overline{\mathcal{U}}_\omega, \underline{\mathcal{U}}_\omega \}$ to itself. Because of this property, one can determine the SRG of an LTI operator $G$ with its transfer function denoted as $G(s)$ for a specific input space:
\begin{subequations}\label{eqs:lti_input_space_srgs}
\begin{align}
    \SRG_{\mathcal{U}_\omega}(G) &= \{ G(\pm j \omega)\}, \label{eq:lti_input_space_srg_sinus} \\
    \SRG_{\overline{\mathcal{U}}_\omega}(G) &= \hco(\{G(j \omega n) \mid n \in \Z \setminus \{0\} \}), \label{eq:lti_input_space_srg_harmonic}\\
    \SRG_{\underline{\mathcal{U}}_\omega}(G) &= \hco(\{G(j \omega / n) \mid n \in \Z \setminus \{0\} \}), \label{eq:lti_input_space_srg_subharmonic} \\
    \SRG(G) &= \hco(\{ G(j \tilde{\omega}) \mid \tilde{\omega} \in \R \}). \label{eq:lti_input_space_srg_full}
\end{align}
\end{subequations}
Our contribution is the SRG in~\eqref{eq:lti_input_space_srg_sinus}, \eqref{eq:lti_input_space_srg_harmonic} and \eqref{eq:lti_input_space_srg_subharmonic}, whereas the SRG in~\eqref{eq:lti_input_space_srg_full} was already derived in~\cite{chaffeyGraphicalNonlinearSystem2023}.

\subsubsection{Computing nonlinear Bode plots using SRGs}

A period-preserving nonlinear operator $R$ can generate higher harmonics, therefore for a fixed $\omega \in \R_{>0}$
\begin{equation}\label{eq:incremental_input_space_mappings}
    R : \mathcal{U}_\omega \to \overline{\mathcal{U}}_\omega, \quad R : \overline{\mathcal{U}}_\omega \to \overline{\mathcal{U}}_\omega, \quad R : \underline{\mathcal{U}}_\omega \to \Lte.
\end{equation}
Eqs.~\eqref{eqs:lti_input_space_srgs} and \eqref{eq:incremental_input_space_mappings} can be used to determine which input specific SRGs to use for the LTI parts of the LFR in~\eqref{eq:lfr_S} in the following way. First, define an SRG bound for an LFR of the form in~\eqref{eq:lfr_S} as
\begin{equation}\label{eq:lfr_srg_bound}
\begin{split}
    \SRG_{\mathcal{U}_1}(P_\mathrm{ew}^S) (\SRG(\phi)^{-1} - \SRG_{\mathcal{U}_1}(P_\mathrm{zw}^S))^{-1} \\ \times \SRG_{\mathcal{U}_2}(P_\mathrm{zr}^S) + \SRG_{\mathcal{U}_2}(P_\mathrm{er}^S) =: \mathcal{G}^\mathrm{LFR}_{\mathcal{U}_1,\mathcal{U}_2}(S),
\end{split}
\end{equation}
where $\mathcal{U}_1,\mathcal{U}_2$ are arbitrary sets. Note that~\eqref{eq:lfr_srg_bound} is evaluated using Proposition~\ref{prop:srg_calculus}, where the chord/arc properties are satisfied for sums/products of operators. Then, we can bound the SRG of $S$ for specific input spaces as
\begin{subequations}\label{eqs:srg_input_space_bound_with_lfr}
\begin{align}
    \SRG_{\mathcal{U}_\omega}(S) &\subseteq \mathcal{G}^\mathrm{LFR}_{\overline{\mathcal{U}}_\omega, \mathcal{U}_\omega}(S), \\
    \SRG_{\overline{\mathcal{U}}_\omega}(S) &\subseteq \mathcal{G}^\mathrm{LFR}_{\overline{\mathcal{U}}_\omega, \overline{\mathcal{U}}_\omega}(S),\\
    \SRG_{\underline{\mathcal{U}}_\omega}(S) &\subseteq \mathcal{G}^\mathrm{LFR}_{\Lte, \underline{\mathcal{U}}_\omega}(S), \\
    \SRG(S) &\subseteq \mathcal{G}^\mathrm{LFR}_{\Lte, \Lte}(S).
\end{align}
\end{subequations}
The sets in~\eqref{eqs:srg_input_space_bound_with_lfr} relate to the frequency-dependent gains in~\eqref{eqs:Gammas} as 
\begin{subequations}\label{eqs:freq_dep_gain_bounds}
\begin{align}
    \Gamma_\omega(S) \leq \rmin(\mathcal{G}^\mathrm{LFR}_{\overline{\mathcal{U}}_\omega, \mathcal{U}_\omega}(S)) &=: \hat{\Gamma}_\omega(S), \\ \overline{\Gamma}_\omega(S) \leq \rmin(\mathcal{G}^\mathrm{LFR}_{\overline{\mathcal{U}}_\omega, \overline{\mathcal{U}}_\omega}(S)) &= : \hat{\overline{\Gamma}}_\omega(S), \\ \underline{\Gamma}_\omega(S) \leq \rmin(\mathcal{G}^\mathrm{LFR}_{\Lte, \underline{\mathcal{U}}_\omega}(S)) &=: \hat{\underline{\Gamma}}_\omega(S), \\
    \Gamma(S) \leq \rmin(\mathcal{G}^\mathrm{LFR}_{\Lte, \Lte}(S)) &=: \hat{\Gamma}(S),
\end{align}
\end{subequations}
where the hats are used to indicate that they are \emph{upper bounds}, and not necessarily exact. By the same argument that was used for deriving~\eqref{eq:sinus_harmonic_gain_inequalities}, we can conclude
\begin{equation}\label{eq:sinus_harmonic_gain_bounds_inequalities}
    \hat{\Gamma}_\omega(S) \leq \hat{\overline{\Gamma}}_\omega(S) \leq \hat{\Gamma}(S), \quad \hat{\Gamma}_\omega(S) \leq \hat{\underline{\Gamma}}_\omega(S) \leq \hat{\Gamma}(S).
\end{equation}
Note that since $\lim_{\omega \to \infty} \underline{\mathcal{U}}_\omega = \lim_{\omega \downarrow 0} \overline{\mathcal{U}}_\omega = \Lte$, we can conclude (analogously to~\eqref{eq:harm_subharm_limit}) that 
\begin{equation}\label{eq:harm_subharm_bound_limit}
    \maketextstyle \lim_{\omega \to \infty} \hat{\underline{\Gamma}}_\omega (S) = \lim_{\omega \downarrow 0} \hat{\overline{\Gamma}}_\omega(S) = \hat{\Gamma}(S).
\end{equation}
We denote $\hat{\omega}_B$ and $\hat{\omega}_c$ as the bandwidths that are estimated using~\eqref{eqs:freq_dep_gain_bounds} in Definitions~\ref{def:closed_loop_bandwidth} and \ref{def:open_loop_bandwidth}. From~\eqref{eqs:freq_dep_gain_bounds} it is readily derived that
\begin{equation*}
    \hat{\omega}_B \leq \omega_B,\quad \hat{\omega}_c \geq \omega_c.
\end{equation*}

To summarize, one should follow the following recipe to compute the NL Bode plot.
\begin{enumerate}
    \item Write the system $S$ in LFR form to arrive at~\eqref{eq:lfr_S}.
    \item Compute the SRGs in~\eqref{eqs:srg_input_space_bound_with_lfr} using Proposition~\ref{prop:srg_calculus}. 
    \item Compute the radius to arrive at~\eqref{eqs:freq_dep_gain_bounds} and plot these values as function of the frequency.
\end{enumerate}

The analysis of the loop transfer $L$ (or any other transfer) follows exactly the same steps.

\subsubsection{Plants with integrators}\label{sec:gain_bounds_with_integrator}

As mentioned in Section~\ref{sec:lfr_simple_feedback}, we assume that $P_\mathrm{yw}^L, P_\mathrm{ze}^L, P_\mathrm{ye}^L, P_\mathrm{ew}^S, P_\mathrm{zr}^S, P_\mathrm{er}^S$ are stable and that $(\phi^{-1} - P_\mathrm{zw}^L)^{-1}$ and $(\phi^{-1} - P_\mathrm{zw}^S)^{-1}$ are stable on $\Lte$. However, when any of these LTI operators contains an integrator, which is commonplace in practice, the assumptions for our analysis would not hold. 

However, noting that $\SRG_{\overline{\mathcal{U}}_\omega}(G)$ is bounded for all $G$ with only stable poles and integrators, we can still compute $\hat{\Gamma}_\omega(S)$ and $\hat{\overline{\Gamma}}_\omega(S)$ for all $\omega$ such that these bounds return a finite value. This can be understood from the LTI case, where an integrator is not \emph{Bounded-Input-Bounded-Output} (BIBO) stable for all possible inputs, but is BIBO stable for periodic inputs with zero mean. This \enquote{extension} for plants with integrators is particularly useful for analyzing the loop transfer $L$, which commonly contains integrators. Examples of these cases are motion control setups~\cite{pogromskyFrequencyDomainPerformance2014}. Note that for this extension to work, the system must be period-preserving.

\subsection{Loop shaping}

The frequency-dependent gain bounds in~\eqref{eqs:freq_dep_gain_bounds} can be used for the design of controllers with performance guarantees. We distinguish two different approaches: loop shaping and mixed-sensitivity shaping.

\subsubsection{Interpretation of the gain bounds}\label{sec:interpretations}

In the NL case, one must use $\hat{\underline{\Gamma}}_\omega(S)$ to study the low-frequency behavior and $\hat{\overline{\Gamma}}_\omega(S)$ for the high-frequency behavior. The harmonic gain bound can also be used to provide a non-approximative upper bound for the frequency-domain analysis methods in~\cite{rijlaarsdamComparativeOverviewFrequency2017}. By studying $\hat{\overline{\Gamma}}_\omega(S)$, one addresses the question of \enquote{what is the lowest frequency in the input for which the controller has no influence (i.e. $|S| \approx 1$)?} Conversely, by studying $\hat{\underline{\Gamma}}_\omega(S)$, one addresses the question of \enquote{what is the highest frequency that can be allowed in the input to guarantee good tracking behavior?} Finally, one can use $\hat{\Gamma}_\omega(S)$ if one is interested in sinusoidal inputs specifically, for example in~\cite{pavlovFrequencyDomainPerformance2007,pogromskyFrequencyDomainPerformance2014}.

\subsubsection{Application to loop shaping}

In the loop shaping case, one uses $\hat{\overline{\Gamma}}_\omega(L)$ to tune the open-loop bandwidth to a desired level. At the same time, $\hat{\Gamma}(S)$ provides an upper bound for the \emph{modulus margin} and guarantees stability when $\hat{\Gamma}(S)$ is finite. The interpretation of performance in this NL loop shaping framework is the following: if the bandwidth is at most $\hat{\omega}_c$, then it is certain that the feedback loop is not sensitive to inputs with period $T = 2 \pi / \hat{\omega}_c$ and higher harmonics.

Perhaps the more promising approach is mixed-sensitivity shaping. In that case, one computes~\eqref{eqs:freq_dep_gain_bounds} for any desired loop transfer $T$ and tunes the controller to achieve the desired shape. Alternatively, one designs input and output LTI weighting filters $W_\mathrm{in}$ and $W_\mathrm{out}$, respectively, for the relevant loop transfer, e.g. sensitivity $S$. Then, one attaches these to the LFR in~\eqref{eq:lfr_S}, changing the LTI blocks in the LFR as $P_\mathrm{ew}^S \to  W_\mathrm{out} P_\mathrm{ew}^S , P_\mathrm{zr}^S \to  P_\mathrm{zr}^S W_\mathrm{in}$ and $P_\mathrm{er}^S \to  W_\mathrm{out} P_\mathrm{er}^S W_\mathrm{in}$. The interpretation of performance in this NL mixed-sensitivity shaping framework is the following: if $\norm{r}_\mathrm{RMS} \leq 1$, then $\norm{e}_\mathrm{RMS} \leq 1$. This result can be seen as a NL generalization of the $\Hinf$ performance concept. If $\hat{\Gamma}(W_\mathrm{out} S W_\mathrm{in}) \leq 1$, we are guaranteed that for each frequency, a sinusoidal input with $\norm{r}_\mathrm{RMS} = \sqrt{|\hat{r}_1|^2+|\hat{r}_{-1}|^2} \leq 1$ results in an output that satisfies $\norm{e}_\mathrm{RMS} \leq 1$, implying $\sqrt{|\hat{e}_1|^2+|\hat{e}_{-1}|^2} \leq 1$, which satisfies the performance specifications encoded in $W_\mathrm{in}$ and $W_\mathrm{out}$.

\subsection{Comparison with Existing Methods}\label{sec:comparison_with_existing_methods}

There are several existing methods to study the frequency-dependent gain of NL systems. The oldest and most well-known is the DF~\cite{krylovIntroductionNonlinearMechanics1947,pogromskyFrequencyDomainPerformance2014,heertjesVariableGainMotion2016}. This method is only approximate and considers only the first harmonic in the response to a sinusoidal input. The advantage is that the gain can be computed for different input amplitudes and considered phase. The DF has been extended to include all harmonics in the output~\cite{pavlovUniformOutputRegulation2006,pavlovFrequencyDomainPerformance2007} and even to consider inputs that contain harmonics~\cite{pavlovSteadystatePerformanceOptimization2013,rijlaarsdamComparativeOverviewFrequency2017}.

The main advantage of our result~\eqref{eqs:freq_dep_gain_bounds} is that it is not approximate, compared to the DF method. The sinusoidal gain $\hat{\Gamma}_\omega(R)$ can be used to reproduce the DF methods that consider sinusoidal inputs. However, $\hat{\Gamma}_\omega(R)$ considers all magnitudes of the input. To obtain input amplitude-dependent results, one should constrain $\SRG_\mathcal{U}(\phi)$ to inputs certain amplitude $\epsilon>0$, where $|x(t)| \leq \epsilon$ for all $x \in \mathcal{U}$ and $t \in \R_{\geq 0}$. This idea was heuristically explored in~\cite{chaffeyLoopShapingScaled2022}, but remains to be developed to be useful.

Whereas $\hat{\Gamma}_\omega(R)$ and $\hat{\overline{\Gamma}}_\omega(R)$ can be compared with existing methods, we must emphasize that there exists no method that considers \emph{subharmonic} inputs to the best of our knowledge. Therefore, the subharmonic gain $\hat{\underline{\Gamma}}_\omega(R)$ provides a novel way to analyze the frequency-domain behavior of the system, especially in the low-frequency regime.

For the sake of fairness, it must be mentioned that the classic DF is easier to compute than the SRG-based gain bounds in~\eqref{eqs:freq_dep_gain_bounds}. Furthermore, the proposed method in this paper considers only SISO systems with one nonlinear operator, however, using the approach in~\cite{krebbekxGraphicalAnalysisNonlinear2025}, it can be extended to the MIMO case, with multiple nonlinearities, both in the incremental and non-incremental setting.

\section{Example}\label{sec:example}

\subsection{The Controlled Nonlinear DC Motor System}

We will consider the position control of a DC motor. The equations of motion follow from Newton's second law and Kirchhoff's law with counter-electromotive force
\begin{subequations}\label{eq:dc_motor}
\begin{align}
    J \Ddot{\theta} + b\dot{\theta} &= K_\mathrm{m} i, \label{eq:dc_motor_newton}\\
    L \frac{di}{dt}+Ri + \delta \sin(\theta) &= u-K_\mathrm{m} \dot{\theta}, \label{eq:dc_motor_em}
\end{align}
\end{subequations}
where $\theta$ (\si{\radian}) is the angular position of the rotor, $u$ (\si{\volt}) is the input voltage and $i$ (\si{\ampere}) is the resulting current in the armature. The parameters in~\eqref{eq:dc_motor} are $J = 0.1 \si{\kg.\m \tothe{2}}, R = 0.96 \si{\ohm}, L = 1\si{\henry}, K_\mathrm{m}=0.2, b=1.0044 \si{\N \m \s}$, while $\delta = 0.1$ is the magnitude of a parasitic NL effect in the motor. By substituting~\eqref{eq:dc_motor_newton} into~\eqref{eq:dc_motor_em} and neglecting the NL term $\delta \sin (\theta)$, one obtains the transfer function $\theta = Gu$ given by
\begin{equation}
    G(s) = \frac{1}{s}\frac{Js + b}{(Ls+R)(Js+b)+K_\mathrm{m}^2},
\end{equation}
where $s\in \C$ is the Laplace variable. For the control configuration, we consider a standard setpoint control problem with reference $r$ (\si{\radian}) and tracking error $e = r-\theta$. Let $u = K e$ be the controller composed of a gain and lead filter
\begin{equation}
    K(s) = 5 \frac{s+1}{s/10+1},
\end{equation}
for which the LTI sensitivity $S_\mathrm{LTI} = 1/(1+GK)$ is stable.

\begin{figure}[tb]
    \centering

    \tikzstyle{block} = [draw, rectangle, 
    minimum height=2em, minimum width=2em]
    \tikzstyle{sum} = [draw, circle, scale=0.5, node distance={0.5cm and 0.5cm}]
    \tikzstyle{input} = [coordinate]
    \tikzstyle{output} = [coordinate]
    \tikzstyle{pinstyle} = [pin edge={to-,thin,black}]
    
    \begin{tikzpicture}[auto, node distance = {0.3cm and 0.5cm}]
        \node [input, name=input] {};
        \node [sum, right = of input] (sum) {};
        \node [coordinate, right = of sum] (e) {};
        \node [coordinate, above = of e] (e_out) {};
        \node [coordinate, above = of e_out] (e_out2) {};
        \node [coordinate, right = of e_out2] (e_out3) {};
        \node [block, right = of e] (controller) {$K(s)$};
        \node [sum, right = of controller] (sigma) {};
        \node [block, right = of sigma] (lti) {$G(s)$};
        \node [coordinate, right = of lti] (z_intersection) {};
        \node [output, right = of z_intersection] (output) {}; %
        \node [block, below = of lti] (static_nl) {$\phi$};
        \node [coordinate, right = of static_nl] (phi_intersection) {};
    
        \draw [->] (input) -- node {$r$} (sum);
        \draw [->] (sum) -- (controller);
        
        \draw [-] (e) --  (e_out);
        \draw [-] (e_out) --  (e_out2);
        \draw [->] (e_out2) -- node {$e$} (e_out3);
        \draw [->] (controller) -- node {$u$} (sigma);
        \draw [->] (sigma) -- node {$u'$} (lti);
        \draw [-] (lti) -- node [name=z] {$\theta$} (z_intersection);
        \draw [->] (z_intersection) |- (static_nl);
        \draw [->] (static_nl) -| node[pos=0.99] {$-$} (sigma);
        \node [coordinate, below = of static_nl] (tmp1) {$H(s)$};
        \draw [->] (z_intersection) |- (tmp1)-| node[pos=0.99] {$-$} (sum);
    
    \end{tikzpicture}
    
    \caption{Block diagram of the controlled NL DC motor in the example.}
    \label{fig:controlled_DC}
    \vspace{-1em}
\end{figure}
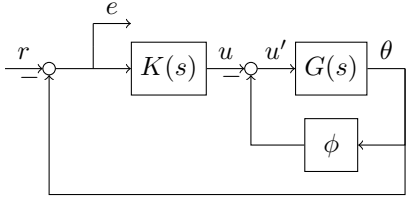

To take the nonlinearity in~\eqref{eq:dc_motor_em} into account, we define $u'=u-\phi(\theta)=u-\delta \sin(\theta)$. The resulting feedback interconnection for the controlled NL DC motor is depicted in Fig.~\ref{fig:controlled_DC}.

\subsection{Frequency-Domain Analysis of the Nonlinear Model}

To study the NL model in the frequency domain, we evaluate the right-hand side of~\eqref{eqs:freq_dep_gain_bounds} for both the sensitivity $S: r \mapsto e$ and the loop transfer $L: e \mapsto \theta$.

\subsubsection{SRG computation}

The first step is to write these operators in the LFR form of~\eqref{eq:lfr}. By setting $w = \phi(z)$, we obtain after some simple calculations that 
\begin{align*}
    P_\mathrm{zw}^L&= -G ,P_\mathrm{\theta w}^L= -G, P_\mathrm{ze}^L=L_\mathrm{LTI}, P_\mathrm{\theta e}^L= L_\mathrm{LTI}, \\ 
    P_\mathrm{zw}^S&=-S_\mathrm{LTI} G, P_\mathrm{ew}^S= S_\mathrm{LTI} G, P_\mathrm{zr}^S=S_\mathrm{LTI} G K, P_\mathrm{er}^S= S_\mathrm{LTI}.
\end{align*}
The LFR form for $S$ is also a nonlinear Lur'e system as used in~\cite{pavlovFrequencyDomainPerformance2007} with incrementally sector bounded nonlinearity. We can therefore conclude that it is a convergent system, which guarantees the period preserving property. Note that for $L$, since it is not stable by virtue of the integrator, the period preserving property has to be assumed.

Second, we need to compute the SRG of the nonlinearity. From~\cite{chaffeyGraphicalNonlinearSystem2023} we know that $\SRG(\phi) \subseteq D_{[-\delta, \delta]}$. 

To illustrate how~\eqref{eqs:freq_dep_gain_bounds} is evaluated, we show the SRG computations explicitly for some frequency values. In Fig.~\ref{fig:srg_calc_S_incr}, the necessary SRGs are plotted to evaluate~\eqref{eq:lfr_srg_bound} and compute $\hat{\Gamma}(S)$. Because all sets that are multiplied in~\eqref{eq:lfr_srg_bound} using Proposition~\ref{prop:srg_calculus}.\ref{eq:srg_calculus_parallel} have finite radius, the resulting bound $\hat{\Gamma}(S)$ must be finite. Similarly, in Figs.~\ref{fig:srg_calc_S_omega} and \ref{fig:srg_calc_L_omega} we show the SRGs that are required to compute $\hat{\overline{\Gamma}}_3(S)$ and $\hat{\overline{\Gamma}}_1(L)$, respectively, and these yield bounded results. To illustrate the problem that might occur with integrators, as discussed in Section~\ref{sec:gain_bounds_with_integrator}, we compute the necessary SRGs for $\hat{\overline{\Gamma}}_{0.05}(L)$ in Fig.~\ref{fig:srg_calc_L_problem}. Because $\SRG(\phi)^{-1}$ and $\SRG_{\overline{\mathcal{U}}_{0.05}}(P^L_\mathrm{zw})$ overlap, it holds that $0 \in \SRG(\phi)^{-1} - \SRG_{\overline{\mathcal{U}}_{0.05}}(P^L_\mathrm{zw})$, hence the SRG bound in~\eqref{eq:lfr_srg_bound} becomes unbounded.

In all cases, the SRG sums, products and inverses are computed using Proposition~\ref{prop:srg_calculus}. Additionally, chord or arc segments are added to the SRGs if required for a sum or product operation, respectively.

\begin{figure}[tb]
    \centering
    \begin{subfigure}[t]{0.24\linewidth}
     \centering
     \includegraphics[width=\linewidth]{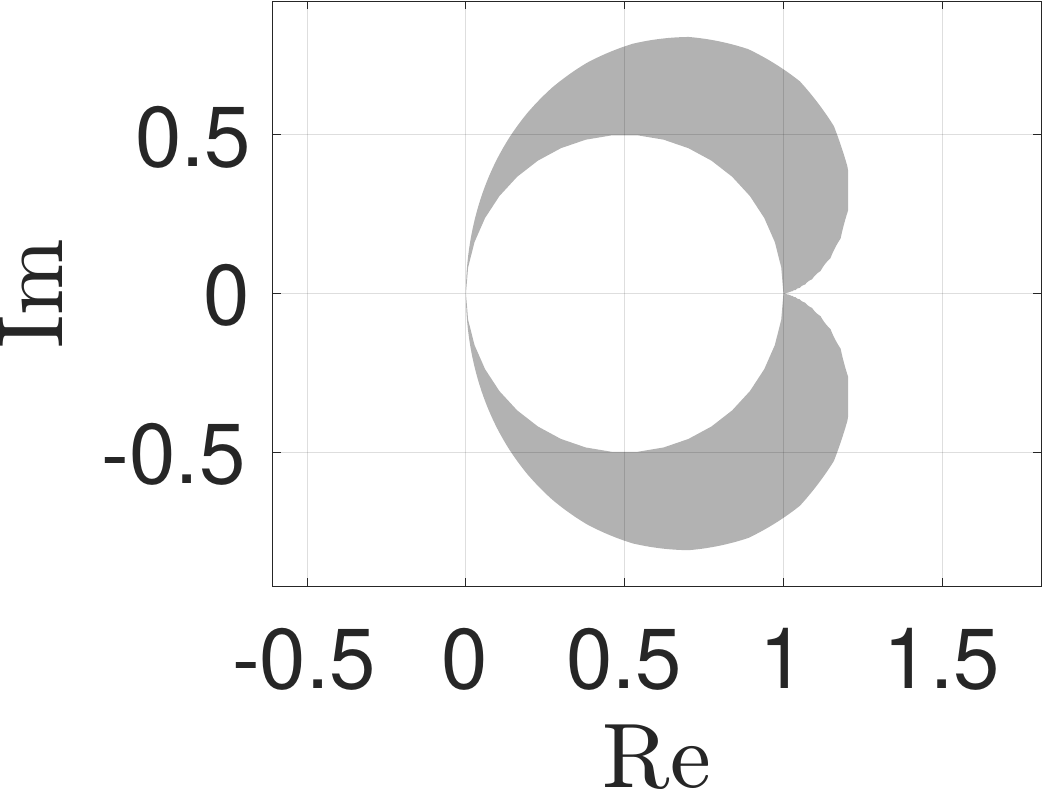}
     \caption{}
    \end{subfigure}
    \hfill
    \begin{subfigure}[t]{0.24\linewidth}
     \centering
     \includegraphics[width=\linewidth]{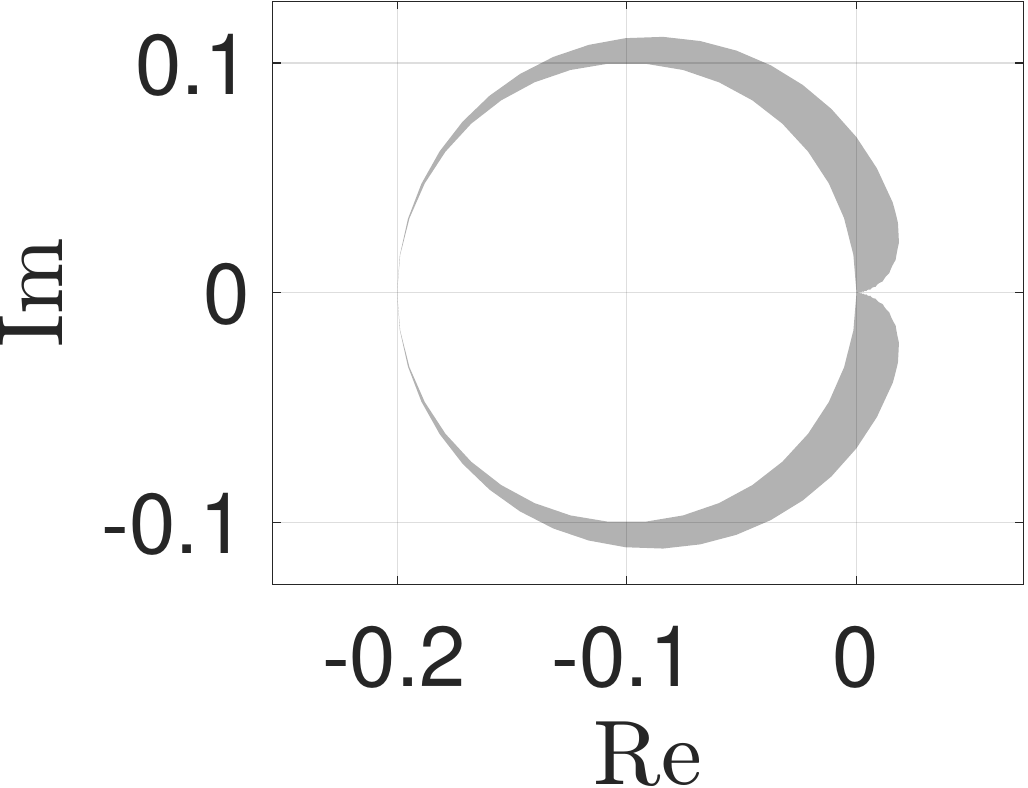}
     \caption{}
    \end{subfigure}
    \hfill
    \begin{subfigure}[t]{0.24\linewidth}
     \centering
     \includegraphics[width=\linewidth]{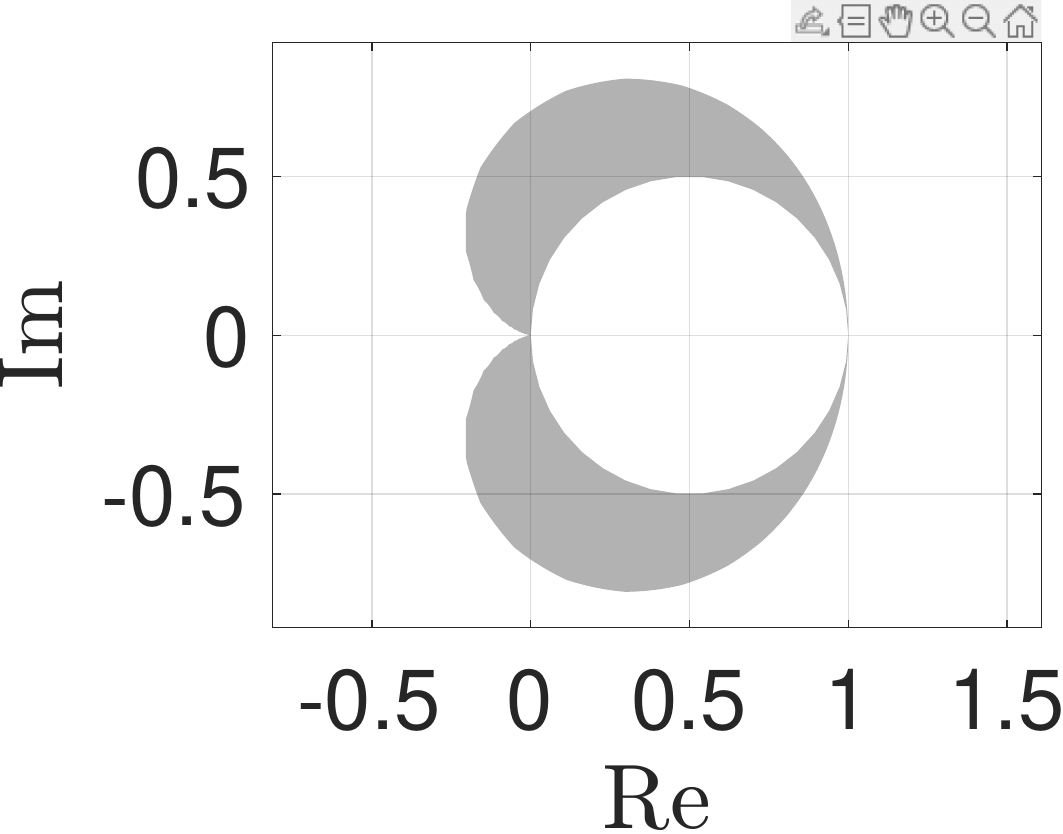}
     \caption{}
    \end{subfigure}
    \begin{subfigure}[t]{0.24\linewidth}
     \centering
     \includegraphics[width=\linewidth]{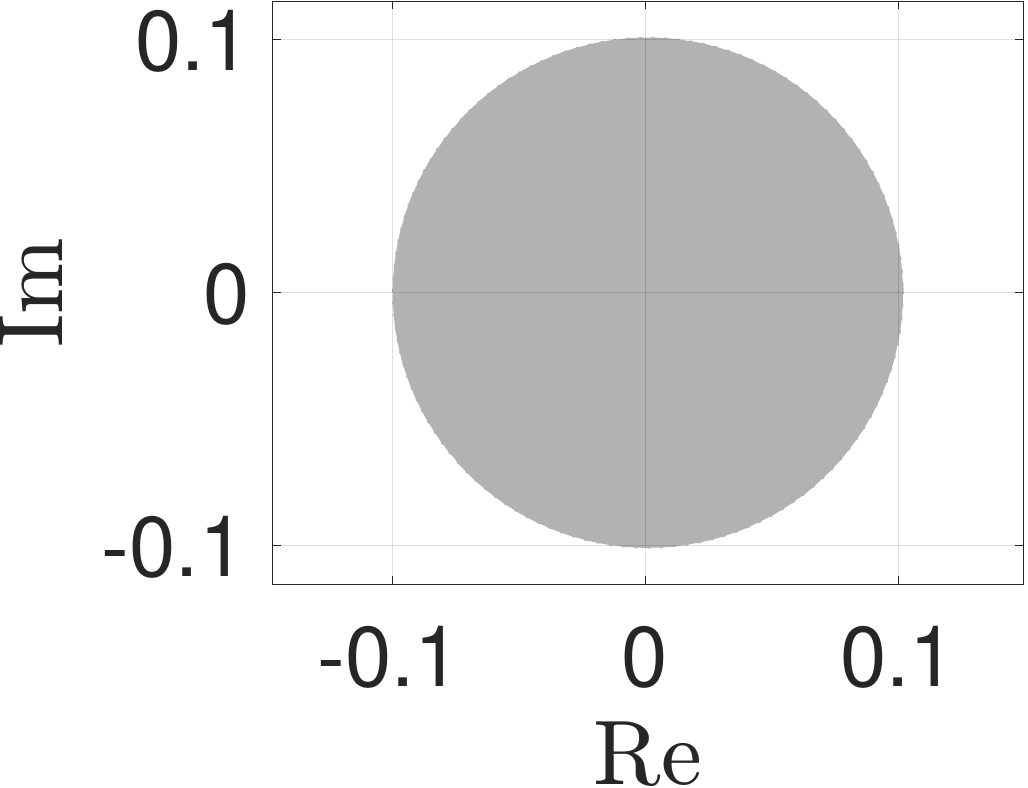}
     \caption{}
    \end{subfigure}
    
    \caption{SRGs for computing $\hat{\Gamma}(S)$. (a) $\SRG(P_\mathrm{er}^S)$, (b) $\SRG(P_\mathrm{zw}^S)$, (c) $\SRG(P_\mathrm{zr}^S)$, (d) $(\SRG(\phi)^{-1} - \SRG(P_\mathrm{zw}^S))^{-1}$.}
    \label{fig:srg_calc_S_incr}
    \vspace{-1em}
\end{figure}

\begin{figure}[t]
    \centering
    \begin{subfigure}[t]{0.24\linewidth}
     \centering
     \includegraphics[width=\linewidth]{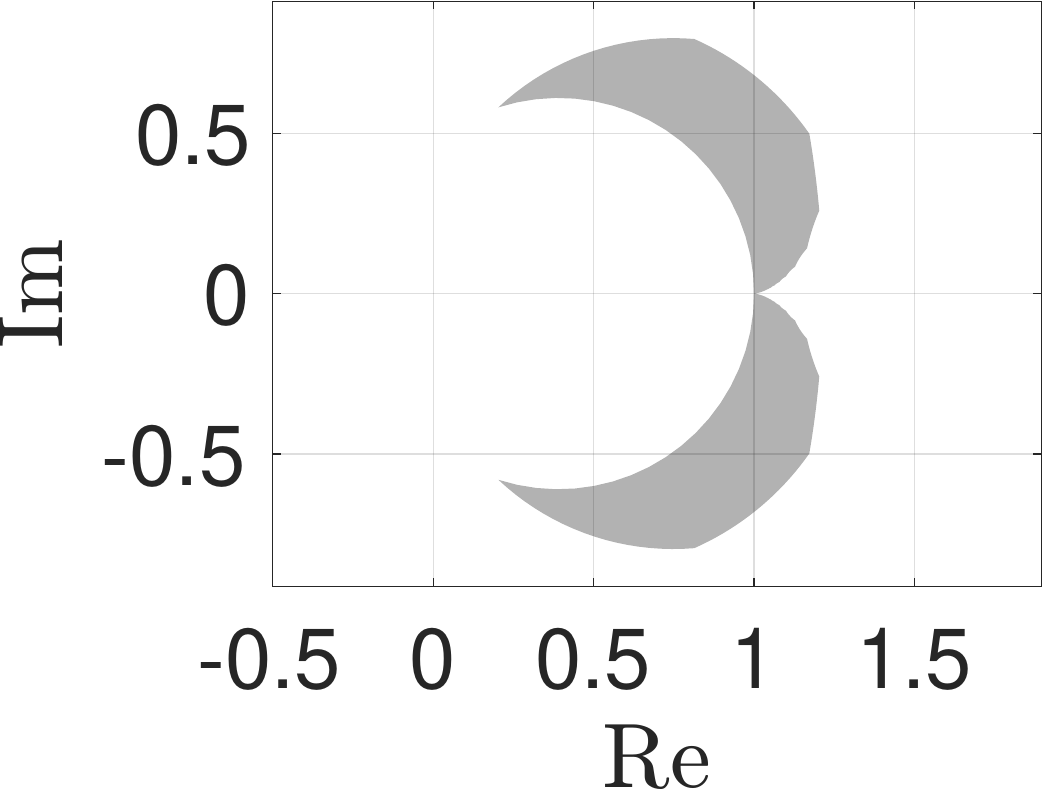}
     \caption{}
    \end{subfigure}
    \hfill
    \begin{subfigure}[t]{0.24\linewidth}
     \centering
     \includegraphics[width=\linewidth]{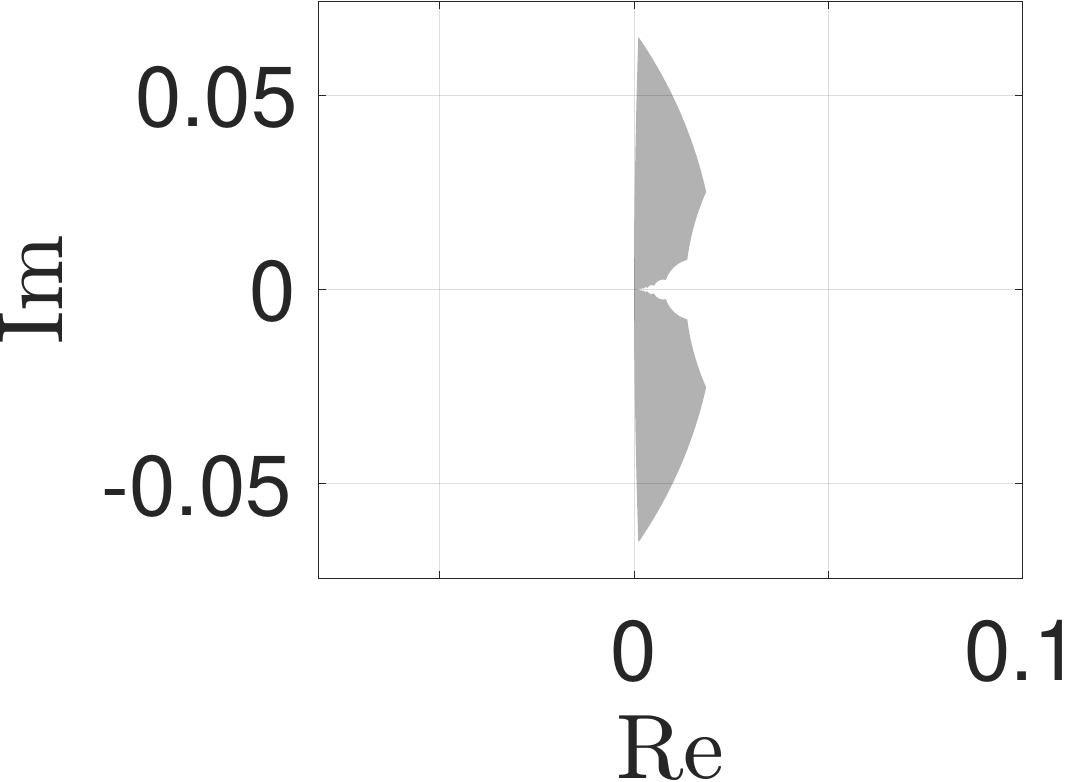}
     \caption{}
    \end{subfigure}
    \hfill
    \begin{subfigure}[t]{0.24\linewidth}
     \centering
     \includegraphics[width=\linewidth]{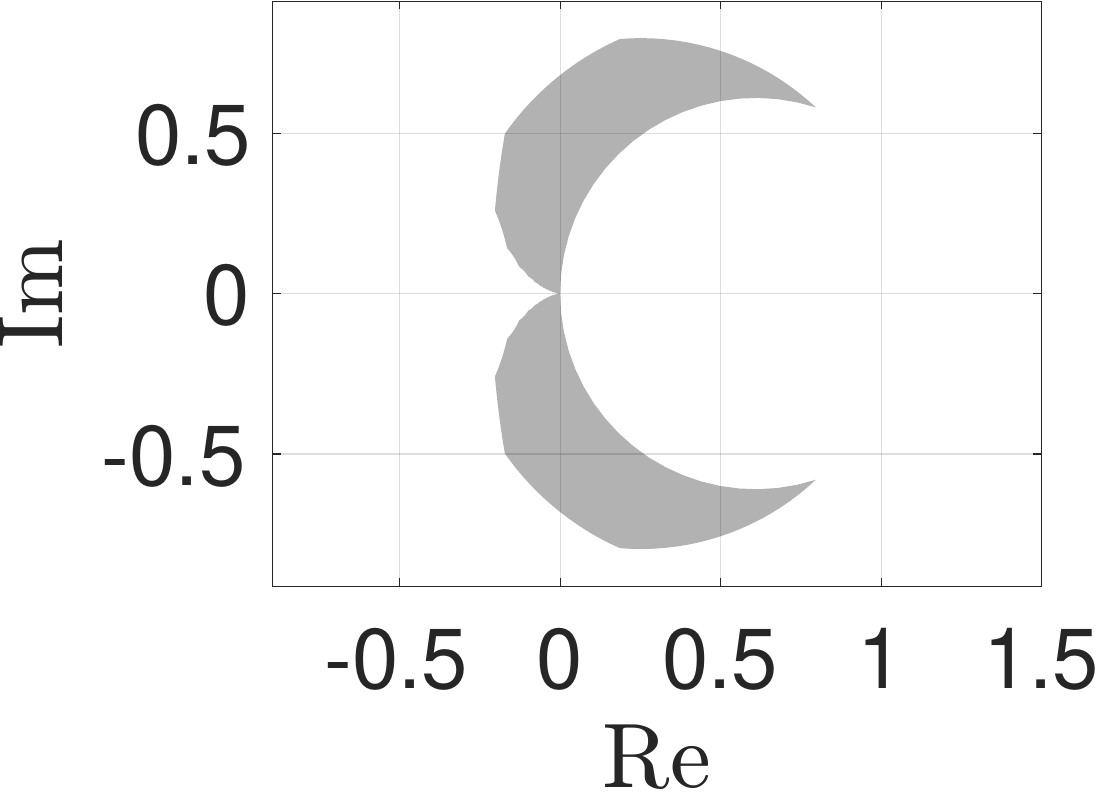}
     \caption{}
    \end{subfigure}
    \hfill
    \begin{subfigure}[t]{0.24\linewidth}
     \centering
     \includegraphics[width=\linewidth]{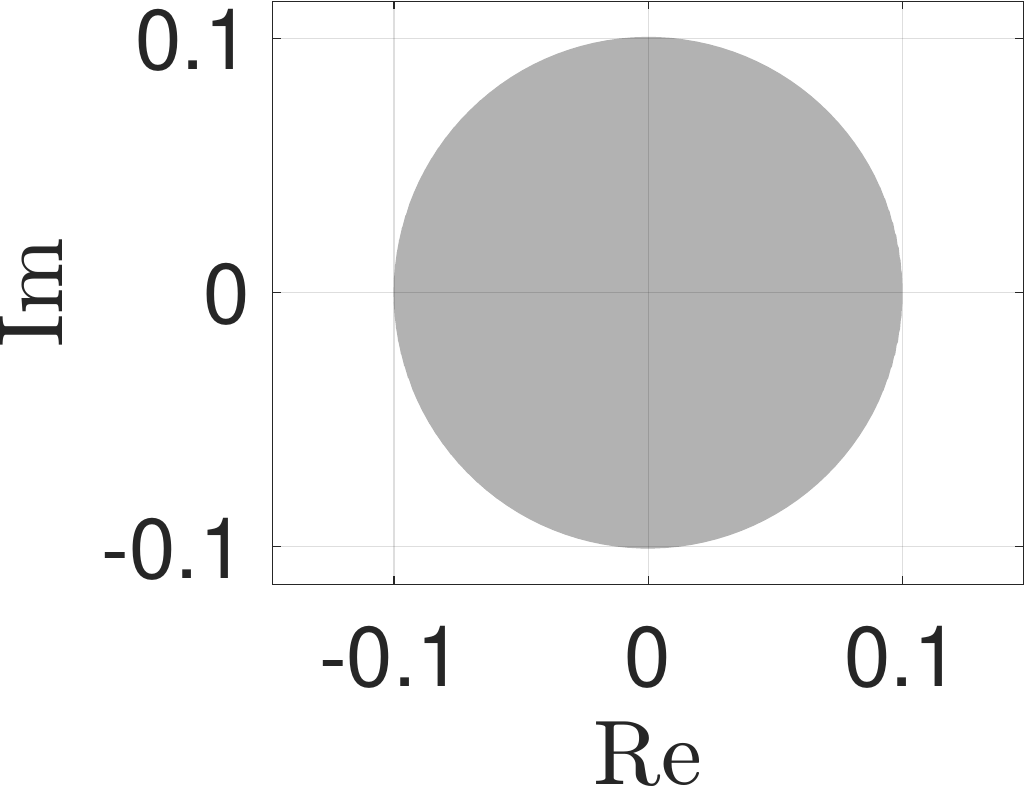}
     \caption{}
    \end{subfigure}
    
    \caption{SRGs for computing $\hat{\overline{\Gamma}}_3(S)$. (a) $\SRG_{\overline{\mathcal{U}}_3}(P_\mathrm{er}^S)$, (b) $\SRG_{\overline{\mathcal{U}}_3}(P_\mathrm{zw}^S)$, (c) $\SRG_{\overline{\mathcal{U}}_3}(P_\mathrm{zr}^S)$, (d) $(\SRG(\phi)^{-1} - \SRG_{\overline{\mathcal{U}}_3}(P_\mathrm{zw}^S))^{-1}$.}
    \label{fig:srg_calc_S_omega}
    \vspace{-1em}
\end{figure}

\begin{figure}[t]
    \centering
    \begin{subfigure}[t]{0.3\linewidth}
     \centering
     \includegraphics[width=\linewidth]{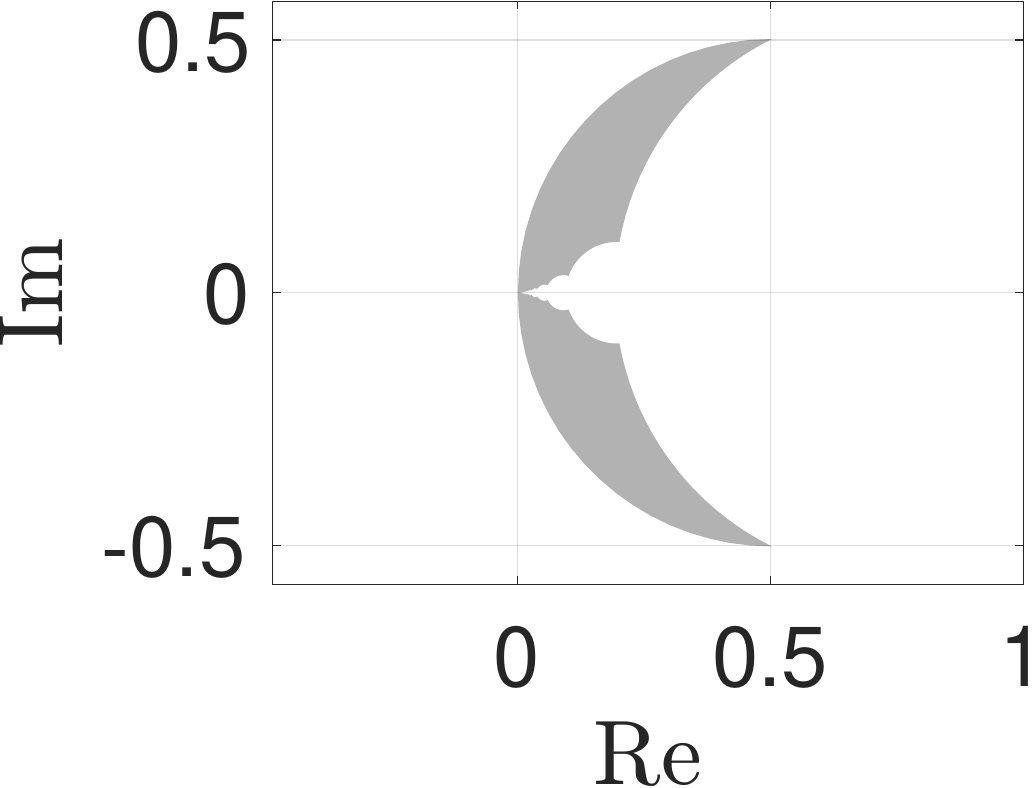}
     \caption{}
    \end{subfigure}
    \hfill
    \begin{subfigure}[t]{0.3\linewidth}
     \centering
     \includegraphics[width=\linewidth]{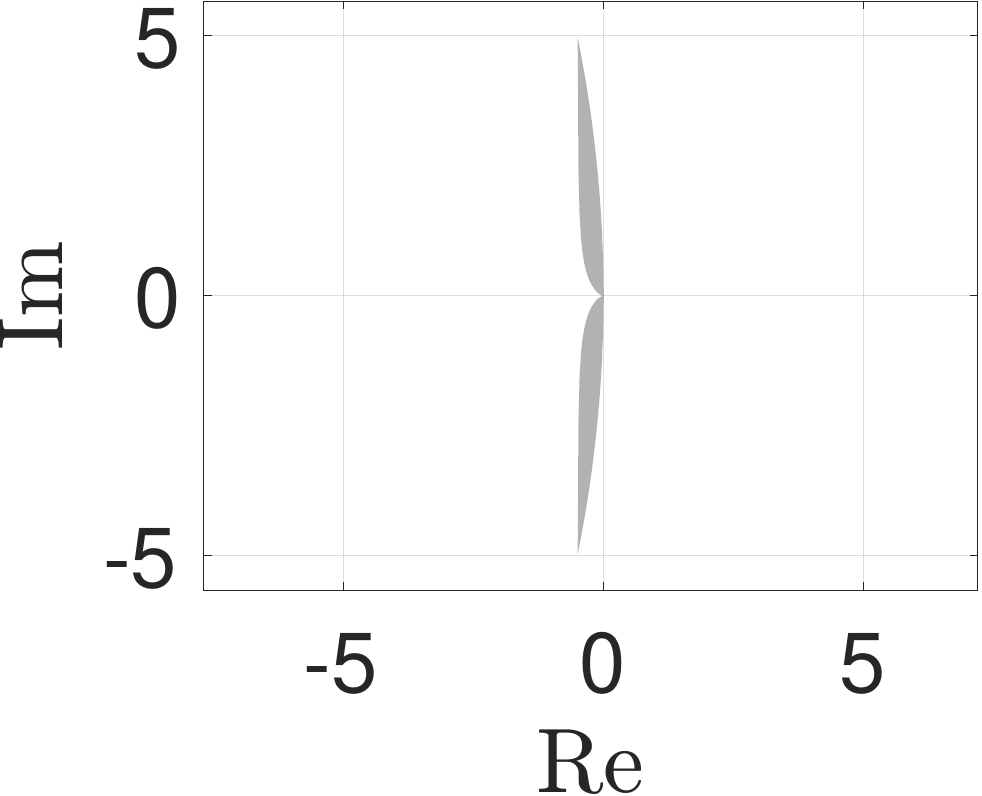}
     \caption{}
    \end{subfigure}
    \hfill
    \begin{subfigure}[t]{0.3\linewidth}
     \centering
     \includegraphics[width=\linewidth]{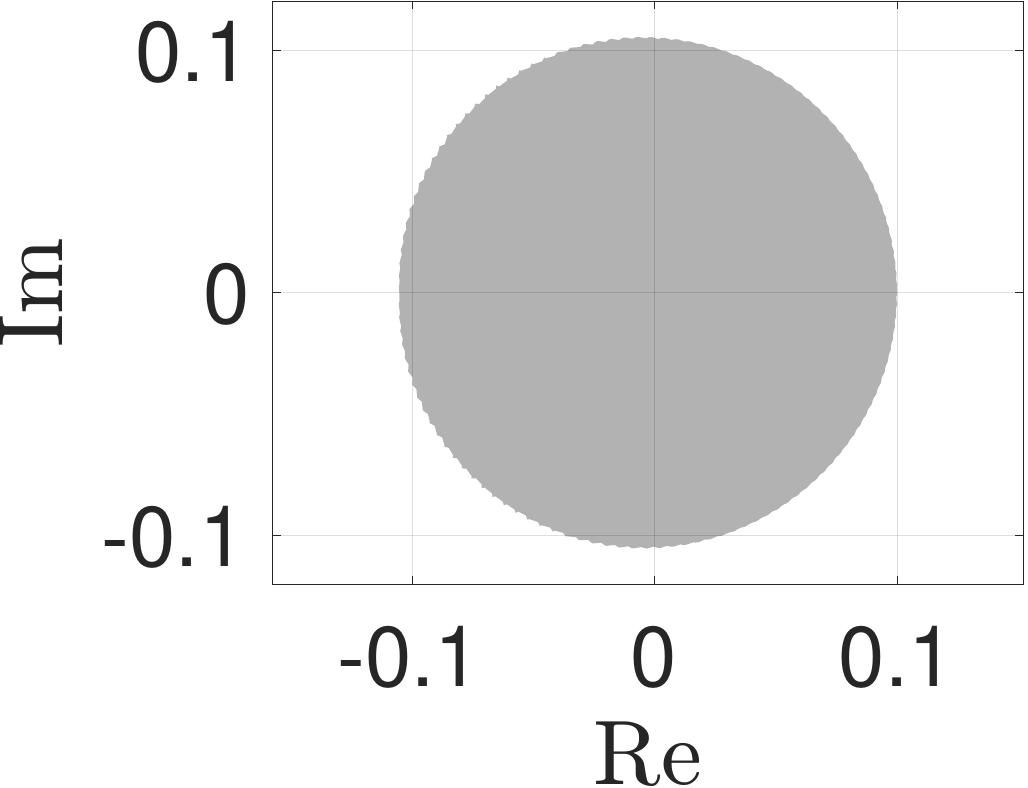}
     \caption{}
    \end{subfigure}

    \caption{SRGs for computing $\hat{\overline{\Gamma}}_1(L)$. (a) $\SRG_{\overline{\mathcal{U}}_1}(P_\mathrm{zw}^L)$, (b) $\SRG_{\overline{\mathcal{U}}_1}(P_\mathrm{ze}^L)$, (c) $(\SRG(\phi)^{-1} - \SRG_{\overline{\mathcal{U}}_1}(P_\mathrm{zw}^L))^{-1}$.}
    \label{fig:srg_calc_L_omega}
    \vspace{-1.5em}
\end{figure}

\begin{figure}[b]
    \vspace{-1em}
    \centering
    \includegraphics[width=0.4\linewidth]{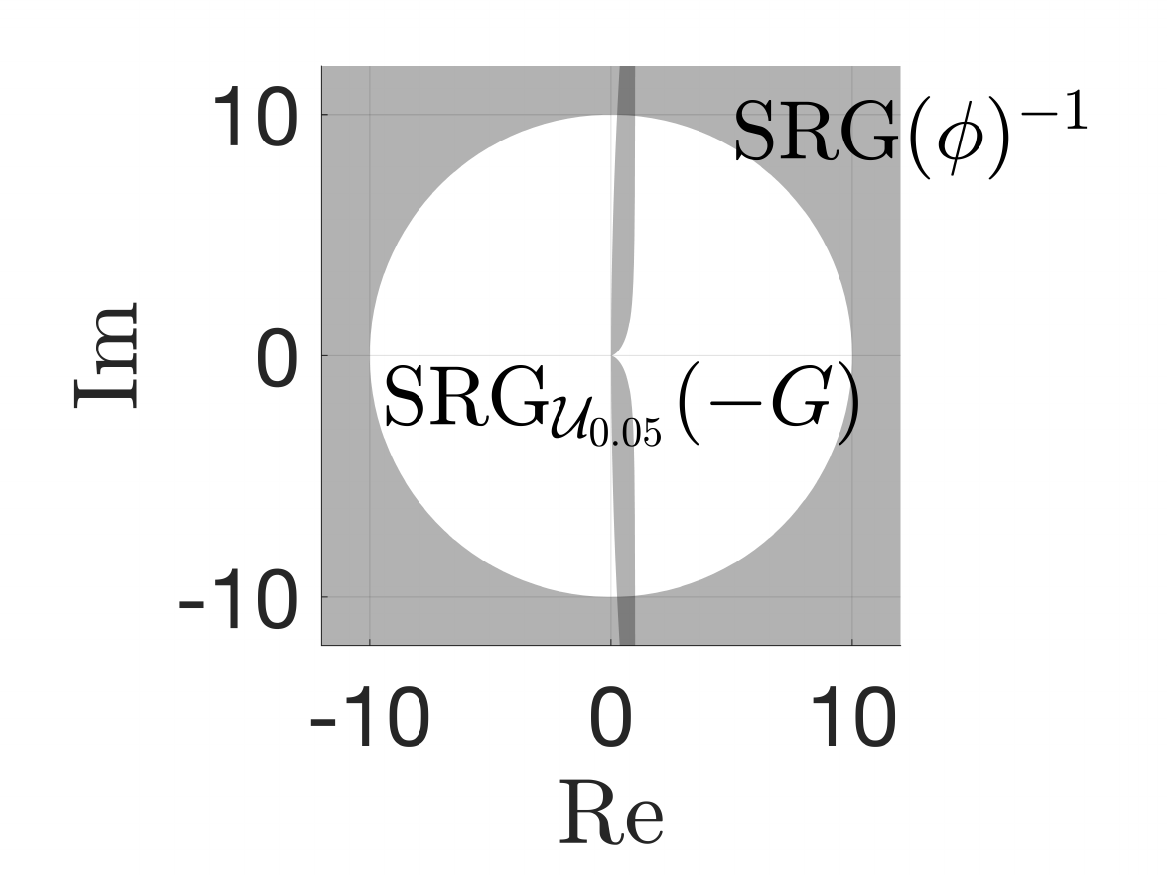}
    \caption{$\SRG(\phi)^{-1}$ and $\SRG_{\overline{\mathcal{U}}_{0.05}}(P_\mathrm{zw}^L)$}
    \label{fig:srg_calc_L_problem}
\end{figure}

\subsubsection{Sensitivity analysis}

The third and last step is to evaluate~\eqref{eqs:freq_dep_gain_bounds} on a grid of frequency points.
The result for the sensitivity $S$ is plotted in Fig.~\ref{fig:bode_S}, where we also included $S_\mathrm{LTI}$ for comparison. 

We can read off, using~\eqref{eq:harm_subharm_bound_limit}, that $\hat{\Gamma}(S) = 2.24 \si{\decibel} = 1.29$, concluding $\Gamma(S) \leq 1.29$. Now in the LTI case, one uses $S_\mathrm{LTI}$ to study both the low- and high-frequency behavior. As explained in Section~\ref{sec:interpretations}, we use $\hat{\underline{\Gamma}}_\omega(S)$ and $\hat{\overline{\Gamma}}_\omega(S)$ to study the low- and high-frequency behavior, respectively. 

From $\hat{\underline{\Gamma}}_\omega(S)$ in Fig.~\ref{fig:bode_S}, we can read off what the highest frequency is that can be allowed in the input to achieve good tracking performance. As is clear from Fig.~\ref{fig:bode_S}, the NL Bode diagram provides this information and predicts a closed-loop bandwidth estimate of $\hat{\omega}_B=3.3 \si{\radian}$. The harmonic gain bound can also be used to provide a non-approximative upper bound for the frequency-domain analysis methods in~\cite{rijlaarsdamComparativeOverviewFrequency2017}. 

Conversely, from $\hat{\overline{\Gamma}}_\omega(S)$ in Fig.~\ref{fig:bode_S}, one can read off what the response of the system is to high frequency inputs, including harmonics. The NL Bode diagram gives a frequency region in which $\hat{\overline{\Gamma}}_\omega(S)$ increases from $1$ to $\hat{\Gamma}(S)$. However, for low-frequency behavior, $\hat{\overline{\Gamma}}_\omega(S)$ is not useful.

From $\hat{\Gamma}_\omega(S)$ in Fig.~\ref{fig:bode_S} one can see what the response is to sinusoidal inputs, and that $\hat{\Gamma}_\omega(S)$ resembles a typical LTI sensitivity graph. However, the harmonic and subharmonic gain bounds offer a far more general result.

\begin{figure}[t]
    \centering
    \begin{subfigure}[t]{0.45\linewidth}
     \centering
     \includegraphics[width=\linewidth]{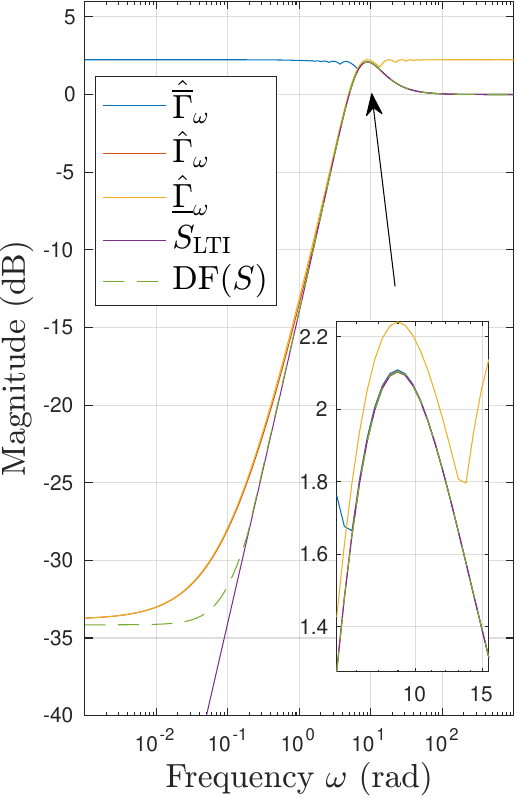}
     \caption{NL sensitivity bode diagram for the DC motor.}
     \label{fig:bode_S}
    \end{subfigure}
    \hfill
    \begin{subfigure}[t]{0.45\linewidth}
     \centering
     \includegraphics[width=\linewidth]{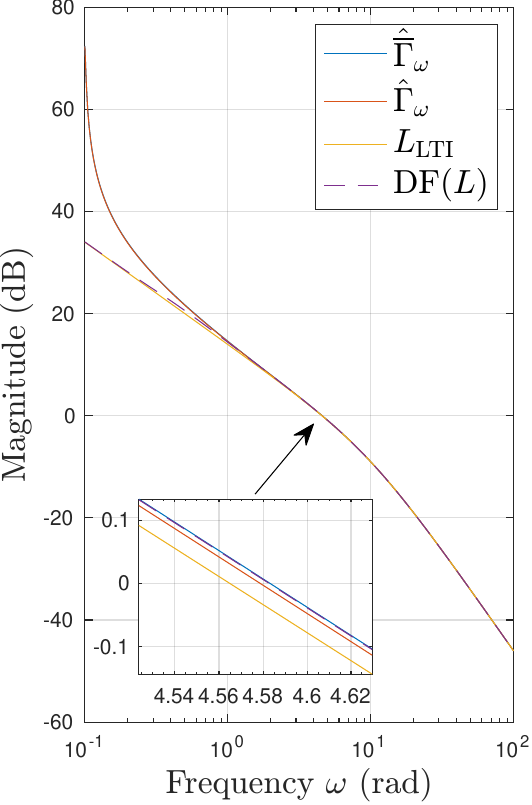}
     \caption{NL loop transfer bode diagram for the DC motor.}
     \label{fig:bode_L}
    \end{subfigure}
    \caption{NL Bode diagrams for sensitivity and loop transfer.}
    \label{fig:nl_bodeplots}
    \vspace{-1.5em}
\end{figure}

One particularly interesting feature that is present in Fig.~\ref{fig:bode_S} is the fact that the integrator behavior in $S_\mathrm{LTI}$ is no longer present in the NL case. Overall, it seems that the low-frequency behavior differs more from the LTI case than the high-frequency behavior. We will verify these observations in Section~\ref{sec:simulations}.

\subsubsection{Loop transfer analysis}

For the loop transfer $L$, we do not compute the subharmonic part from~\eqref{eqs:freq_dep_gain_bounds}. The reason for this is that the LTI parts of the LFR for $L$ contain an integrator. Therefore, as explained in Section~\ref{sec:gain_bounds_with_integrator}, the subharmonic gain will always be infinite. The relevant frequency-dependent incremental gain bounds for $L$ are plotted in Fig.~\ref{fig:bode_L}, where $L_\mathrm{LTI}$ is included for comparison.

From $\hat{\overline{\Gamma}}_\omega(L)$ in Fig.~\ref{fig:bode_L}, we can read off that $\hat{\omega}_c = 4.58 \si{\radian}$. Furthermore, we see that both the sinusoidal gain $\hat{\Gamma}_\omega(L)$ and harmonic gain $\hat{\overline{\Gamma}}_\omega(L)$ diverge at a finite nonzero value of $\omega$. The reason for this is that $\rmin((\SRG(\phi)^{-1} - \SRG_{\mathcal{U}}(P_\mathrm{zw}^L))^{-1})$ diverges due to the integrator in $G$, where $\mathcal{U} \in \{ \mathcal{U}_\omega, \overline{\mathcal{U}}_\omega \}$.

\subsubsection{Nonlinear gain upper bounds the LTI Gain}

Note that since $P^L_\mathrm{\theta e} = L_\mathrm{LTI}$ and $P^S_\mathrm{er} = S_\mathrm{LTI}$ and both $S_\mathrm{LTI}$ and $L_\mathrm{LTI}$ converge to zero, we can add $|L_\mathrm{LTI}(j \omega)| \leq \hat{\Gamma}_\omega(L)$ and $|S_\mathrm{LTI}(j \omega)| \leq \hat{\Gamma}_\omega(S)$ to the inequalities in~\eqref{eq:sinus_harmonic_gain_bounds_inequalities}. From Figs.~\ref{fig:bode_S} and \ref{fig:bode_L} one can see that these inequalities hold. 

An immediate consequence is that $\hat{\omega}_B \leq \omega_B^\mathrm{LTI}, \hat{\omega}_c \geq \omega_c^\mathrm{LTI}$, where $\omega_B^\mathrm{LTI}$ and $\omega_c^\mathrm{LTI}$ are the closed-loop and open-loop bandwidth, respectively. Note that in the LTI case, we do not need a hat to indicate the estimate since we can compute the bandwidths exactly using the LTI bode diagram.

\subsubsection{Comparison with describing function}

The DF considers only the first Fourier coefficient  of $\phi(A \sin (\omega t))$, which is denoted $B(A,\omega) \sin(\omega t + \phi(A,\omega))$, where $\phi$ is a nonlinearity~\cite{krylovIntroductionNonlinearMechanics1947}. The DF is defined as $\mathrm{N}(A,\omega) = \frac{B(A,\omega)}{A} e^{j \phi(A,\omega)}$. In the case $\phi(x) = \sin(x)$, we know from~\cite[Eq.~9.1.43]{abramowitzHandbookMathematicalFunctions1972} that $\sin(z \sin(\theta)) = 2 \sum_{k=0}^\infty J_{2k+1}(z)\sin((2k+1)\theta)$, hence $\mathrm{N}(A,\omega) = 2 J_1(A)/A$,
where $J_1$ is a Bessel function of the first kind. The DF approximation to~\eqref{eq:lfr_S} is defined as $\mathrm{DF}(S)(\omega)  = \sup_{A \in \R} | P_\mathrm{ew}^S(\omega) (\mathrm{N}(A,\omega)^{-1} - P_\mathrm{zw}^S(\omega))^{-1} P_\mathrm{zr}^S(\omega) + P_\mathrm{er}^S(\omega) |$
and $\mathrm{DF}(L)$ is defined analogously for~\eqref{eq:lfr_L}. 

They are both plotted in Fig.~\ref{fig:nl_bodeplots} and offer a gain \emph{approximation} (not a bound) between the LTI and sinusoidal ($\hat{\Gamma}_\omega$) gains, as expected. The advantage of the DF is its low computational complexity compared to the SRG method, but this comes at the cost of being only an approximation instead of a guaranteed gain bound.

\subsection{Simulation Results}\label{sec:simulations}

\begin{figure}[t]
    \centering
    \begin{subfigure}[t]{0.32\linewidth}
     \centering
     \includegraphics[width=\linewidth]{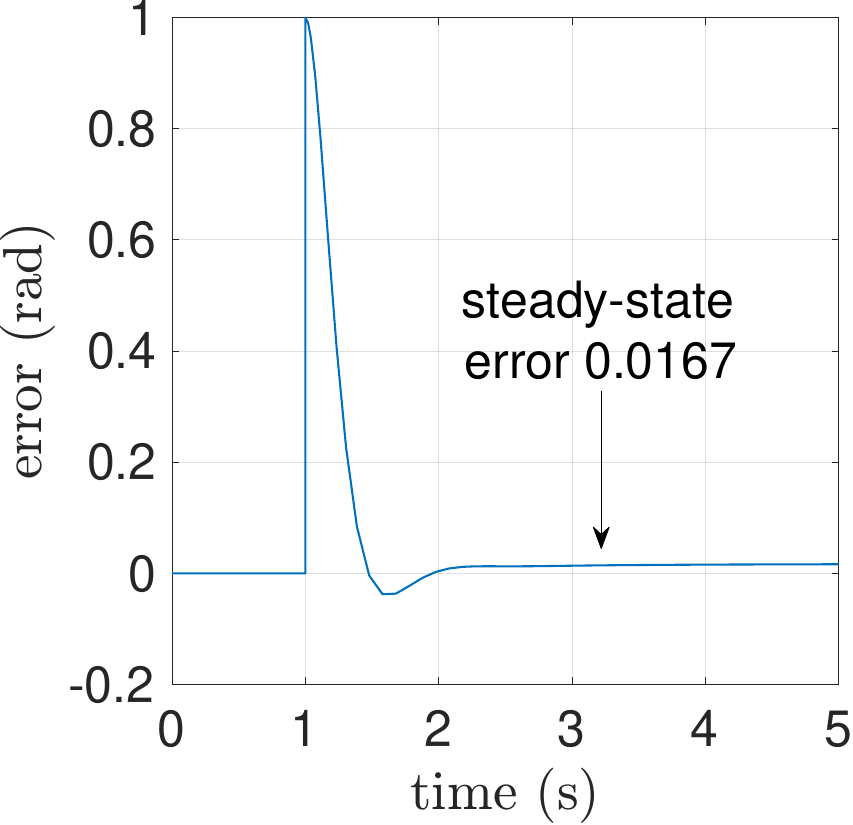}
     \caption{Step input $r_1$.}
     \label{fig:sims_dc_motor_step}
    \end{subfigure}
    \hfill
    \begin{subfigure}[t]{0.33\linewidth}
     \centering
     \includegraphics[width=\linewidth]{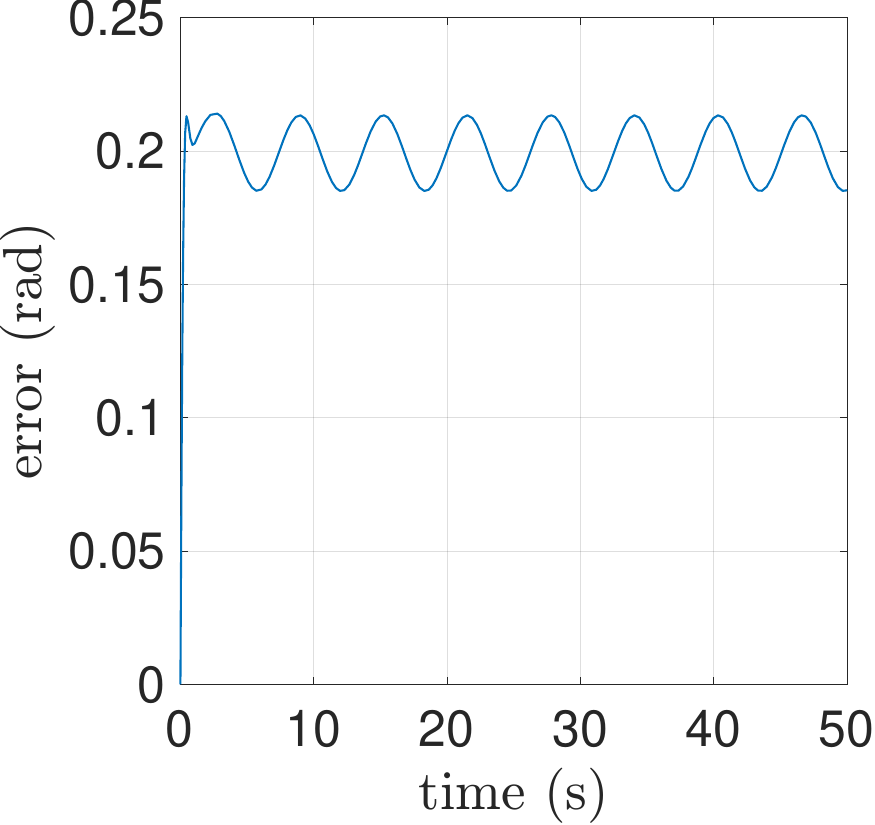}
     \caption{Ramp input $r_2$.}
     \label{fig:sims_dc_motor_ramp}
    \end{subfigure}
    \hfill
    \begin{subfigure}[t]{0.31\linewidth}
     \centering
     \includegraphics[width=\linewidth]{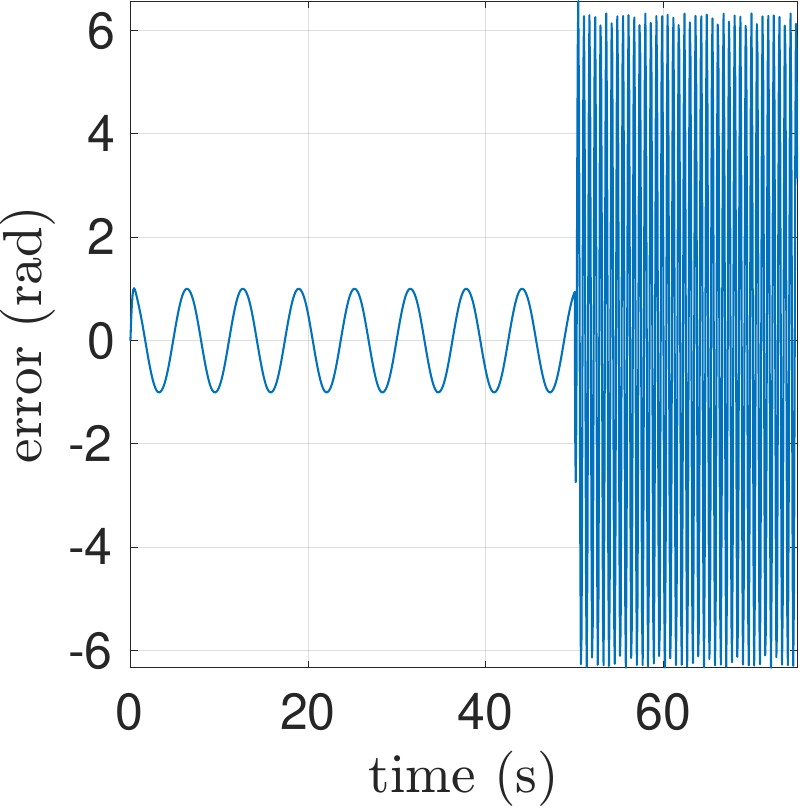}
     \caption{Periodic input $r_3$.}
     \label{fig:sims_dc_motor_periodic}
    \end{subfigure}
    \caption{Simulated responses of the controlled NL DC motor.}
    \label{fig:sims_dc_motor}
    \vspace{-1em}
\end{figure}

We simulate the system in Fig.~\ref{fig:controlled_DC} for three different references, a step $r_1$, ramp $r_2$, and periodic signal $r_3$ that switches between two different frequencies, defined by $r_2(t)=t$ and
\begin{equation*}
    r_1(t) = \begin{cases}
        0 \text{ if } t \leq 1, \\
        1 \text{ else},
    \end{cases} \,
    r_3(t) = \begin{cases}
        5 \sin(t) \text{ if } t \leq 50, \\
        5 \sin(10 t) \text{ else}.
    \end{cases}
\end{equation*}

Using the step reference $r_1$, one can see from Fig.~\ref{fig:sims_dc_motor_step} that the step response settles at a nonzero steady state error $0.0167 =-35.5 \si{\decibel}$. This corresponds to the observation that the gain $\hat{\underline{\Gamma}}_\omega(S)$ in the NL Bode diagram Fig.~\ref{fig:bode_S} has no integrator behavior. Moreover, we see that $\hat{\underline{\Gamma}}_0(\omega) \approx -34 \si{\decibel}$, which provides an upper bound for the steady state error, analogous to the LTI case.

The simulation of the ramp reference $r_2$ in Fig.~\ref{fig:sims_dc_motor_ramp} reveals a periodic response to a non-periodic input, which is a NL effect not present in the LTI model $S_\mathrm{LTI}$. The period of the response is $1 \si{\radian / \second}$, corresponding to the nonlinearity \mbox{$\propto\sin(\theta)$}, where $\theta$ tracks $r_2(t)=t$.

Finally, the simulation of the reference $r_3$, which switches between two sinusoidal signals of different frequencies, reveals three things. First, it is clear from Fig.~\ref{fig:sims_dc_motor_periodic} that the system is indeed period preserving. Second, one can read off that the amplitude gain is $1.01$ for $\omega = 1 \si{\radian / \second}$, while the amplitude gain for $10 \si{\radian / \second}$ is $6.3$. We see that $\hat{\Gamma}_1(S)\approx -13.45 \si{\decibel}=0.213$ and $\hat{\Gamma}_{10}(S)\approx 2.05 \si{\decibel}=1.27$. This corresponds to the input amplitude of $5$, since $5\cdot \hat{\Gamma}_1(S) \approx 1.06$ and $5 \cdot \hat{\Gamma}_{10}(S) \approx 6.3$, recovering the amplitude of $e$ in the steady state regimes. Note that $\hat{\Gamma}_\omega(S)$ provides \emph{upper bounds} for the amplitude, not merely approximations like the DF. Third, it is clear that there is no large transient behavior at the transition points at $t=0$ and $t=50$, which warrants the use of the NL Bode diagram to describe the performance.

\section{Conclusion}\label{sec:conclusion}

This paper develops graphical frequency-domain analysis tools for NL systems that preserve the periodicity of the input. The NL Bode diagram goes beyond existing methods that are restricted to sinusoidal inputs. In addition, we can compute the gain for subharmonic input signals, enabling a precise low-frequency sensitivity analysis. We briefly highlight how our method can be used for NL loop shaping. Our results offer a clear interpretation in the frequency domain and a definition of the NL bandwidth. Finally, the effectiveness of our method is demonstrated on the position control of a NL DC motor, and compared with the DF.

Topics of future work include the extension to MIMO systems, which includes the case of multiple nonlinearities. Additionally, it is an important objective to improve the computational efficiency of the method.

\footnotesize
\bibliographystyle{IEEEtran} 
\bibliography{bibliography} 

\end{document}